\newcommand{\set}[1]{\left\{ #1 \right\}}
\newcommand{\spar}[1]{\left[ #1 \right]}
\newcommand{\integer}{\mathbb{Z}}
\renewcommand{\natural}{\mathbb{N}}
\DeclareMathOperator{\poly}{poly}
\DeclareMathOperator{\Win}{Win}
\def\eqdef{\mathrel{{:}{=}}}
\def\e#1{\emph{#1}}
\def\T{\mathbf{T}}
\def\Q{\mathbf{Q}}
\def\M{\mathbf{M}}
\def\np{\text{$\mathsf{NP}$}\xspace}
\def\sharpp{\text{$\mathsf{\#P}$}\xspace}
\def\bpp{\text{$\mathsf{BPP}$}\xspace}
\def\p{\text{$\mathsf{P}$}\xspace}
\def\fp{\text{$\mathsf{FP}$}\xspace}
\newcommand{\eat}[1]{}
\def\win#1#2{\Win (#1\mathbin{;}#2)}
\def\lose#1#2{\text{Lose}(#1\mathbin{;}#2)}
\def\loseto#1#2#3{\text{L}(#1, #2\mathbin{;}#3)}
\newenvironment{repeatresult}[2]
{\vskip0.5em\par\textsc{#1} #2.\em}
{\vskip1em}
\title{Probabilistic Inference of Winners in Elections by Independent Random Voters} 
\author{Aviram Imber}{Technion, Israel}{}{}{}
\author{Benny Kimelfeld}{Technion, Israel}{}{}{}
\authorrunning{A.~Imber and B.~Kimelfeld}
\keywords{Social choice, voting rules, probabilistic voters, approximation.} 
\begin{document}

\maketitle

\begin{abstract}
We investigate the problem of computing the probability of winning in an election where voter attendance is uncertain. More precisely, we study the setting where, in addition to a total ordering of the candidates, each voter is associated with a probability of attending the poll, and the attendances of different voters are probabilistically independent. We show that the probability of winning can be computed in polynomial time for the plurality and veto rules. However, it is computationally hard (\#P-hard) for various other rules, including $k$-approval and $k$-veto for $k>1$, Borda, Condorcet, and Maximin. For some of these rules, it is even hard to find a multiplicative approximation since it is already hard to determine whether this probability is nonzero. In contrast, we devise a fully polynomial-time randomized approximation scheme (FPRAS) for the complement probability, namely the probability of losing, for every positional scoring rule (with polynomial scores), as well as for the Condorcet rule.
\end{abstract}

\section{Introduction}
The theory of social choice targets the question of how voter preferences should be aggregated to arrive at a collective decision.  It spans centuries of research, from law making in Ancient Rome to the 20th century's fundamental theory and the recent study of the algorithmic and computing aspects---computational social choice.  (See~\cite{DBLP:reference/choice/0001CELP16} for an overview.) In the common formal setting, there are voters and candidates, each voter has an individual preference, namely ordering over the candidates, and a voting rule is applied to the preference profile (collection of orderings) in order to elect a winner. Past research has investigated various ways of incorporating situations of \e{uncertainty} in this setting, including incomplete and probabilistic preferences~\cite{konczak2005voting,DBLP:conf/aaai/BachrachBF10,DBLP:journals/ai/HazonAKW12,DBLP:conf/aaai/KenigK19, DBLP:conf/sigecom/FilmusO14}, probabilistic candidate validity~\cite{DBLP:conf/aaai/WojtasF12,DBLP:conf/aaai/BoutilierLOP14}, and probabilistic voter participation~\cite{10.2307/1953324,DBLP:conf/aaai/WojtasF12,DBLP:conf/aamas/WalshX12,DBLP:conf/atal/DeyB15,DBLP:conf/aaai/Micha020}.  Further settings include bribery with probabilistic voter cooperation~\cite{DBLP:conf/aaai/ChenXXGS19,DBLP:conf/ijcai/ChenXXGS19} and random elicitation for efficient outcome determination~\cite{DBLP:conf/sigecom/CaragiannisPS13,DBLP:conf/aldt/LuB11} .

In this work, we focus on the basic setting where voters are randomly drawn. In particular, the election outcome is probabilistic and, therefore, each candidate has a marginal probability of being elected.  More formally, each voter $v_i$ is associated with a total order over the candidates, and she casts a vote with probability $p_i$ (and steps outside the voter list with probability $1-p_i$).  Importantly, we assume that different voters are probabilistically independent. For a candidate $c$, the probability of winning is the probability that the random set of casting voters elects $c$. The aggregation of the preferences of these voters is done by a voting rule, and we consider several well-known alternatives: \e{positional scoring} rules, the \e{Condorcet} rule, and the \e{Maximin} rule (also known as the \e{Simpson} rule). We investigate the computational problem of calculating the probability that a given candidate $c$ is a winner.

{
\begin{table*}[t]
  \renewcommand{\arraystretch}{1.1}
  \newcommand{\thickline}{\Xhline{0.2ex}}
\centering
\caption{\label{tab:complexityOther} Overview of the complexity results.}
\scalebox{0.62}{
\begin{tabular}{l | c | c | c | c | c | c | c}
\thickline
  \textbf{Problem} & plurality, veto & $k$-\{approv., veto\} & Borda & $R(f, \ell)$ & other positional & Condorcet & Maximin \\
\thickline
\multirow{2}{*}{$\Pr[\text{win}]$} & 
\p & \sharpp-h & \sharpp-h & \sharpp-h for $(f, \ell) \neq (1,1)$  & \multirow{2}{*}{?} & \sharpp-h & \sharpp-h  \\
& [Thm.~\ref{thm:computeWinPluralityVeto}] & [Thm.~\ref{thm:computeWinApproval}] & [Thm.~\ref{thm:computeWinBorda}] & [Thm.~\ref{thm:computeWinThreeValues}] & & [Thm.~\ref{thm:computeWinCondorcet}] & [Thm.~\ref{thm:computeWinMaximin}] \\
\hline
\multirow{2}{*}{$\Pr[\text{win}]>0$?} & \p & \p & \np-c & \np-c for $(f, \ell) \neq (1,1)$  & \multirow{2}{*}{?} & \np-c & \np-c \\
& [Thm.~\ref{thm:computeWinPluralityVeto}] & [Thm.~\ref{thm:WinZeroBinaryScores}] & [Thm.~\ref{thm:winZeroBorda}] & [Cor.~\ref{cor:winZeroThreeValues}] & & [Thm.~\ref{thm:winZeroCondorcetMaximin}] & [Thm.~\ref{thm:winZeroCondorcetMaximin}] \\
\hline
Approximate & \p & FPRAS & FPRAS & FPRAS & FPRAS for poly. scores & FPRAS & \multirow{2}{*}{?} \\
$\Pr[\text{lose}]$ & [Thm.~\ref{thm:computeWinPluralityVeto}] & [Thm.~\ref{thm:losingApproxPolyScoreCondorcet}] & [Thm.~\ref{thm:losingApproxPolyScoreCondorcet}] & [Thm.~\ref{thm:losingApproxPolyScoreCondorcet}] & [Thm.~\ref{thm:losingApproxPolyScoreCondorcet}] & [Thm.~\ref{thm:losingApproxPolyScoreCondorcet}] & \\
\thickline
\end{tabular}
}
\end{table*}
}

Our contribution is summarized in Table~\ref{tab:complexityOther}. We first consider the complexity of exact evaluation (the first row of the table).  We start with positional scoring rules; recall that in such a rule, each voter assigns to each candidate a score based on the position of the candidate in the voter's ranking, and the vector of scores is shared among all voters.  For example, in $k$-approval the top $k$ positions get the score $1$ and the rest get $0$, and in $k$-veto the bottom $k$ positions get $0$ and the rest get $1$. The probability of winning can be computed in polynomial time for $k=1$ (i.e., the plurality and veto rules), but \#P-hard for every $k>1$. The problem is also hard for other positional scoring rules such as the Borda rule. Beyond the positional scoring rules, we also show that the problem is \#P-hard for the Condorcet and Maximin rules.

Since exact evaluation is often intractable, we consider the approximate version, that is, computing the probability of winning up to an approximation. There are various natural notions of approximation guarantees.  For an \e{additive} approximation, we can get a Fully Polynomial-Time Randomised Approximation Scheme (FPRAS) by straightforward sampling and averaging. So, we focus on a multiplicative approximation.  Yet, in some of the rules we consider it is intractable to get any approximation guarantee of a sub-exponential ratio, since it is already hard to decide whether the probability of winning is nonzero (as shown in the second row of Table~\ref{tab:complexityOther}). Exceptional are $k$-approval and $k$-veto where this decision problem is solvable in polynomial time, but the existence of an efficient approximation remains an open problem. Nevertheless, we can also consider another kind of a multiplicative approximation, where we consider the complement probability of \e{losing}. (In the additive variant, approximating the probability is the same as approximating its complement, but it is not so in the case of a multiplicative approximation.)  Inspired by the work of Kenig and Kimelfeld~\cite{DBLP:conf/aaai/KenigK19} on probabilistic preferences, we can apply the Karp-Luby-Madras technique~\cite{DBLP:journals/jal/KarpLM89} to establish a multiplicative FPRAS for the probability of losing.  This applies to \e{every} positional scoring rule (with the mild assumption that the scores are all polynomial in the number of voters and candidates) and the Condorcet rule; the problem remains open for Maximin.

While it is important to understand richer frameworks with more complex probabilistic modeling of voters, little has been known about the computational aspects of our basic setting, to the best of our knowledge. The closest studied problem is \e{lot-based voting}, where we uniformly select $k$ voters for a pre-determined number $k$~\cite{DBLP:conf/aamas/WalshX12}, and its generalization that entails a preprocessing step of choosing $k$ itself randomly from a given distribution~\cite{DBLP:conf/aaai/WojtasF12}.  Lot-based polling relates to, and can adopt complexity results on, the problem of \e{control} in elections, where the goal is to detect a small subset of voters of whom elimination can lead to a desired outcome~\cite{DBLP:conf/aaai/WojtasF12}. Yet, it is not clear how to draw conclusions from lot-based voting to out setting. The positive results in this model are based on counting, so the algorithms do not immediately apply to our setting where voters can have individual probabilities. As for the hardness results, in some cases we adapt proofs from that work to our setting, while in other cases the proofs are heavily based on the assumption that the cardinality of the voter set is given. Moreover, that work focused on the \e{exact} (but not \e{approximate}) computation of probabilities where our problem is \#P-hard in all the rules we consider with the exception of plurality and veto. Also related is the problem of predicting the winner of an election given sampled votes~\cite{DBLP:conf/atal/DeyB15,DBLP:conf/aaai/Micha020}. This setting is quite different: the voter set is deterministic, randomness is due to voter sampling, and the goal is to accurately predict the outcome.

The remainder of the paper is organized as follows. In Section~\ref{sec:preliminaries} we give preliminary definitions and terminology.  We discuss exact probability computation in Section~\ref{sec:exact}, the hardness of multiplicative approximation in Section~\ref{sec:approx-hardness}, and the FPRAS algorithms for the probability of losing in Section~\ref{sec:approx-losing}.
We conclude in Section~\ref{sec:conclusions}.
\section{Preliminaries}\label{sec:preliminaries}
We first give definitions, notation, and terminology that we use throughout the paper.

\subsection{Voting Profiles and Voting Rules}
We denote by $C = \set{c_1, \dots, c_m}$ the set of \emph{candidates} and $V = \set{v_1, \dots, v_n}$ the set of \emph{voters}. A \emph{voting profile} $\T = (T_1, \dots, T_n)$ consists of $n$ linear orders on $C$, where each $T_i$ is the ranking of (i.e., linear order over) $C$ by $v_i$. A \e{voting rule} is a function that maps every profile on $C$ to a set of \e{winners} from $C$.

A \e{positional scoring rule} $r$ is a series $\set{ \Vec{s}_m }_{m \in \natural^+}$ of $m$-dimensional score vectors $\Vec{s}_m = (\Vec{s}_m(1), \dots, \Vec{s}_m(m))$ of natural numbers where $\Vec{s}_m(1) \geq \dots \geq \Vec{s}_m(m)$ and $\Vec{s}_m(1) > \Vec{s}_m(m)$. We make the conventional assumption that $\Vec{s}_m(j)$ is computable in polynomial time in $m$.  Examples of positional scoring rules include the \emph{plurality} rule $(1, 0, \dots, 0)$, the \emph{$k$-approval} rule $(1, \dots, 1, 0, \dots, 0)$ that begins with $k$ ones followed by zeroes, the \emph{veto} rule $(1, \dots, 1, 0)$, the \emph{$k$-veto} rule $(1, \dots, 1, 0, \dots, 0)$ that starts with ones and ends with $k$ zeros, and the \emph{Borda} rule $(m-1, m-2, \dots, 0)$.

For two positive integers $f$ and $\ell$, we denote by $R(f, \ell)$ the three-valued rule with scoring vector $\vec{s}_m = (2, \dots, 2, 1, \dots, 1, 0, \dots, 0)$ that begins with $f$ occurrences of two and ends with $\ell$ zeros. For example, the scoring vector for $R(1, 1)$ is $\vec{s}_m = (2, 1, \dots, 1, 0)$.

Given a voting profile $\T = (T_1, \dots, T_n)$, the score $s(T_i, c, r)$ that the voter $v_i$ contributes to the candidate $c$ is $\Vec{s}_m(j)$ where $j$ is the position of $c$ in $T_i$. The score of $c$ in $\T$ is $s(\T, c, r) = \sum_{i=1}^n s(T_i, c, r)$ that we may denote simply by $s(\T, c)$ if $r$ is clear from context. A candidate $c$ is a \e{winner} (also referred to as \e{co-winner}) if $s(\T, c) \geq s(\T, c')$ for all candidates $c'$.\footnote{All of our results apply to, and can be easily adapted to, the \e{unique winner} semantics, where $c$ is a unique winner if $s(\T, c) > s(\T, c')$ for all candidates $c' \neq c$. }

For two candidates $c, c' \in C$, denote by $N_\T(c, c')$ the number of voters that prefer $c$ to $c'$.  A candidate $c$ is a \e{Condorcet winner} if $N_\T(c, c') > N_\T(c', c)$ for all $c' \neq c$. (Note that a Condorcet winner does not necessarily exist; when exists, it is unique.) Under the \e{Maximin} rule, the score of $c$ is $s(\T, c) = \min \set{N_\T(c, c') : c' \in C \setminus \set{c}}$ and a winner is a candidate with a maximal score.

\subsection{Probabilistic Voting Profiles}
In the setting we study, voters participate by being drawn randomly from $V$. We are given as part of the input a vector $(p_1, \dots, p_n) \in [0,1]^n$ of probabilities. Define a random variable $I \subseteq [n]$, where every $i \in [n]$ is in $I$ with probability $p_i$ and different indices are probabilistically independent. (Note our notation of $[n]=\set{1,\dots,n}$.) The random set of voters that participate in the election is $V' = \set{v_i : i \in I}$, and the resulting random profile is $\T' = \set{T_i}_{i \in I}$. The probability of $I$ being a subset $U \subseteq [n]$ is $\Pr[I = U] = \prod_{i \in U} p_i \prod_{i \in [n]\setminus U} (1-p_i)$.

Denote by $\win c {\T'}$ the event that $c$ is a winner of $\T'$ according to a positional scoring rule $r$. Similarly, denote by  $\lose c {\T'}$ the event that $c$ is not a winner. We will investigate the evaluation of the probabilities
$\Pr[\win c {\T'}]$ and $\Pr[\lose c {\T'}]$.

A \e{Fully-Polynomial Randomized Approximation Scheme} (FPRAS) for a probability function $p(x)$ is a randomized algorithm $A(x,\epsilon,\delta)$ that given $x$ for $p$ and $\epsilon,\delta\in(0,1)$, returns an $\epsilon$-approximation of $p(x)$ with probability at least $1-\delta$, in time polynomial in the size of $x$, $1/\epsilon$, and $\log(1/\delta)$. Formally, an FPRAS, satisfies:
\[\Pr[(1-\epsilon) p(x) \leq A(x, \epsilon, \delta) \leq (1+\epsilon) p(x)] \geq 1-\delta \,.\]
Note that this notion of FPRAS refers to a \e{multiplicative} approximation, and we adopt this notion implicitly, unless stated otherwise.

\subsection{Control by Voter Addition}
Let $r$ be a voting rule. In the problem of \e{Constructive Control by Adding Voters} (CCAV) we are given a set $C$ of candidates, a voting profile $\M$ of voters who are already registered, a voting profile $\Q$ of yet unregistered voters, a preferred candidate $c \in C$, and a bound $k \in \natural$. The goal is to test whether we can choose a sublist $\Q' \subseteq \Q$ of size at most $k$ such that $c$ is a winner of $\M \circ \Q'$, where $\M \circ \Q'$ is the concatenation of the two profiles $\M$ and $\Q'$. (The underlying rule $r$ will be clear from the context.)

\begin{theorem}[\cite{DBLP:conf/icaart/Lin11,mastersthesis/russell07,BARTHOLDI199227,DBLP:journals/jair/FaliszewskiHH11}]
The problem of CCAV is solvable in polynomial time for plurality, veto and $k$-approval for $k \leq 3$. CCAV is $\np$-complete for Borda, Condorcet, and $k$-approval for $k \geq 4$.
\end{theorem}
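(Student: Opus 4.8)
The statement collects several results from the literature, so the plan is to treat the tractable and the hard rules separately, following the standard greedy, flow, and reduction techniques. Membership in \np\ for the three \np-completeness claims is immediate: a certificate is the chosen sublist $\Q' \subseteq \Q$, and since $|\Q'| \le k$ and deciding whether $c$ wins $\M \circ \Q'$ takes polynomial time in each of these rules, verification is polynomial.

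For the tractable cases I would proceed as follows. For plurality, adding a voter increases the score of exactly one candidate (its top choice), so adding a voter whose top choice is not $c$ can only hurt $c$'s relative standing, whereas adding a voter topping $c$ raises $c$ by one and touches no opponent; hence the optimal move is to add as many $c$-topping voters from $\Q$ as allowed (up to $k$) and test whether $c$ then wins, which is polynomial. For veto the key observation is that adding a voter raises every candidate by one \emph{except} the one it ranks last; writing $a_d$ for the number of added voters that rank opponent $d$ last (one never adds a voter ranking $c$ last, as that is strictly harmful), the condition $s(\M\circ\Q', c) \ge s(\M\circ\Q', d)$ collapses to $a_d \ge s(\M, d) - s(\M, c)$, independently across opponents. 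The instance is then feasible exactly when $\sum_{d} \max\rpar{0,\, s(\M,d)-s(\M,c)} \le k$ and $\Q$ supplies enough last-place-$d$ voters for each deficient $d$, which is checked directly. For $k$-approval with $k \le 3$ the argument is more delicate, since an added voter now simultaneously boosts up to three candidates; here I would follow the network-flow feasibility formulation of Lin, modelling the assignment of approval slots to candidates as a flow/matching question.

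For \np-hardness, I would reduce $k$-approval with $k \ge 4$ from Exact Cover by 3-Sets (X3C): given a universe $B$ of size $3\kappa$ and a family $\mathcal{S}$ of 3-subsets, create one candidate per element of $B$ (plus $c$ and padding dummies), and for each set $S = \set{b_i,b_j,b_\ell}$ one addable voter approving $c, b_i, b_j, b_\ell$ (filling any remaining slots with dummies). Taking $k = \kappa$ and calibrating the registered scores so that covering an element more than once pushes it strictly above $c$, a winning addition is forced to use $\kappa$ pairwise-disjoint sets, i.e.\ an exact cover. The Borda and Condorcet hardness proofs follow the same X3C template (or a partition-style source), but encode the selection through Borda scores and pairwise-majority margins, respectively: the registered profile is engineered so that making $c$ a (Condorcet) winner corresponds precisely to choosing a valid cover.

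The main obstacle is the $k$-approval dichotomy at the boundary $k=3$ versus $k=4$. On the tractable side, moving from the transparent plurality and veto greedy arguments to $k \le 3$ genuinely requires the flow feasibility analysis, because a single added voter now interacts with several opponents at once. On the hard side, the delicate point of the X3C reduction is the exact score calibration: the registered profile must make \emph{both} failure modes fatal for $c$, namely covering some element twice and being unable to cover everything with $k=\kappa$ sets, so that $c$ is a winner if and only if the selected sets form a disjoint (hence exact) cover. Placing this threshold precisely between ``disjoint'' and ``overlapping'' is where the construction must be carried out carefully.
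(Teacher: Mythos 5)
This theorem is imported background: the paper gives no proof of its own, only citations to Lin, Russell, Bartholdi--Tovey--Trick, and Faliszewski et al., so the only meaningful comparison is against those cited arguments, and your sketch reconstructs exactly them --- the greedy argument for plurality; the per-opponent deficit analysis for veto (which is sound, since the total number $t$ of added voters cancels on both sides of $s(\M\circ\Q',c)\ge s(\M\circ\Q',d)$, making the constraints $a_d\ge s(\M,d)-s(\M,c)$ genuinely independent); Lin's flow/matching algorithm for $k\le 3$; and X3C-style reductions with score calibration for $k$-approval with $k\ge 4$, Borda, and Condorcet. The parts you leave as pointers rather than proofs (the $3$-approval algorithm and the Borda/Condorcet reductions) are precisely the parts the paper itself delegates to the cited works, so your proposal is consistent with the paper's treatment.
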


In the corresponding counting problem \#CCAV, studied by Wojtas and Faliszewski~\cite{DBLP:conf/aaai/WojtasF12}, the goal is to count the sublists $\Q' \subseteq \Q$ of size at most $k$ such that $c$ is a winner of $\M \circ \Q'$. They showed that \#CCAV is in \fp under plurality, and is \sharpp-complete under Condorcet, Maximin and $k$-approval for $k \geq 2$.

Note that in our model, where a random subset of voters is drawn, there is no restriction on the number of voters. For a voting rule $r$, we define the problem of \e{Constructive Control by Adding an Unlimited number of Voters} (CCAUV) as follows: We are given a set $C$ of candidates, a voting profile $\M$ of registered voters, a voting profile $\Q$ of yet unregistered voters, and a preferred candidate $c \in C$. We ask whether we can select a sublist $\Q' \subseteq \Q$ (of any cardinality) such that $c$ is a winner of $\M \circ \Q'$. We also denote the corresponding counting problem by \#CCAUV.

\subsection{Additional Notation}
For a set $A$ and a partition $A_1, \dots, A_t$ of $A$, we use $O(A_1, \dots, A_t)$ to denote an arbitrary linear order on $A$ that satisfies $a_1 \succ \dots \succ a_t$ for every $a_1 \in A_1, \dots, a_t \in A_t$.  A linear order $a_1 \succ \dots \succ a_t$ is also denoted as a vector $(a_1, \dots, a_t)$. The \e{concatenation} of two linear orders $(a_1, \dots, a_t) \circ (b_1, \dots, b_\ell)$ is $(a_1, \dots, a_t, b_1,\dots, b_\ell)$.
\section{ Exact Winning  Probability}\label{sec:exact}
In this section, we study the complexity of computing the probability
that a given candidate is a winner. Recall that the input consists of
a voting profile $\T$, a vector of voter probabilities and a candidate
$c$, and the goal is to compute $\Pr[\win c {\T'}]$.

\subsection{Tractability of Plurality and Veto}\label{sec:tractable-plurality-veto}
For plurality and veto, if the probabilities of all voters are identical, we can easily compute $\Pr[\win c {\T'}]$ in polynomial time by reducing the case of independent voters to the probability distribution studied by Wojtas and Faliszewski~\cite{DBLP:conf/aaai/WojtasF12}. For the general case, where the probabilities can differ, we need a different algorithm.

\begin{theorem}
\label{thm:computeWinPluralityVeto}
For the plurality and veto rules, $\Pr[\win c {\T'}]$ is computable in polynomial time.
\end{theorem}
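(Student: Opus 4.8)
The plan is to exploit the fact that, under both plurality and veto, the winning condition can be rephrased as a comparison among $m$ \emph{independent} integer-valued random variables, one per candidate, each depending on a disjoint block of voters; once this decoupling is in place the probability is assembled by a dynamic program. First consider plurality, where the only relevant information in $T_i$ is the top candidate $t_i$. Partition the voters by their top choice: for each candidate $c_j$ let $S_j = \set{i : t_i = c_j}$, and let $X_j = \sum_{i \in S_j} B_i$, where $B_i$ is the indicator of the event $i \in I$. Since the sets $S_j$ are pairwise disjoint, the variables $X_1, \dots, X_m$ are mutually independent, and the plurality score of $c_j$ in $\T'$ equals $X_j$. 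Hence, writing $c = c_{j^*}$, the event $\win c {\T'}$ is exactly $\set{X_{j^*} \geq X_j \text{ for all } j \neq j^*}$.

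For veto the bottom candidate $b_i$ of each $T_i$ is what matters. Partition the voters by bottom choice, $W_j = \set{i : b_i = c_j}$, and let $Y_j = \sum_{i \in W_j} B_i$; again the $Y_j$ are independent because the $W_j$ are disjoint. If $K = \sum_i B_i$ is the (random) total number of attending voters, then for every fixed realization the veto score of $c_j$ equals $K - Y_j$, so the common term $K$ cancels in each pairwise comparison and the event $\win c {\T'}$ becomes exactly $\set{Y_{j^*} \leq Y_j \text{ for all } j \neq j^*}$.

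In both cases we are left with computing, for independent nonnegative-integer variables, the probability that a designated one is the maximum (plurality) or the minimum (veto). Each $X_j$ (resp.\ $Y_j$) is a sum of independent Bernoulli variables, i.e.\ a Poisson-binomial distribution, so its entire probability mass function can be tabulated in time $O(|S_j|^2)$ (resp.\ $O(|W_j|^2)$) by the standard dynamic program that inserts one voter at a time, after which the cumulative probabilities $\Pr[X_j \leq s]$ are read off. Conditioning on the value of the pivot variable and using independence then gives, for plurality,
\[
\Pr[\win c {\T'}] = \sum_{s \geq 0} \Pr[X_{j^*} = s] \prod_{j \neq j^*} \Pr[X_j \leq s],
\]
and symmetrically $\Pr[\win c {\T'}] = \sum_{s \geq 0} \Pr[Y_{j^*} = s] \prod_{j \neq j^*} \Pr[Y_j \geq s]$ for veto. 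Every factor has already been computed, so the outer sum runs in polynomial time; the co-winner semantics is captured by the non-strict inequalities, while replacing $\Pr[X_j \le s]$ by $\Pr[X_j \le s-1]$ yields the unique-winner variant.

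The crux of the argument is the independence obtained from this disjoint-block decomposition, which is special to $k = 1$: under plurality a voter touches a single candidate, and under veto the uniform $+1$ to every non-vetoed candidate cancels out of all comparisons. I do not expect a serious obstacle beyond verifying the Poisson-binomial dynamic program and the boundary value $s = 0$. The reason the same idea breaks down for $k > 1$ is precisely that a single attending voter then contributes to several candidates simultaneously, coupling their scores through the shared indicator $B_i$; this correlation is exactly what drives the hardness of Theorem~\ref{thm:computeWinApproval}.
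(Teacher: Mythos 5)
Your proposal is correct and follows essentially the same route as the paper's proof: exploit the fact that under plurality (top choices) and veto (bottom choices) the candidates' relevant counts are sums over disjoint voter blocks, hence independent, tabulate each Poisson-binomial distribution by the one-voter-at-a-time dynamic program, and assemble $\Pr[\win c {\T'}]$ by conditioning on the pivot candidate's value and multiplying the other candidates' cumulative probabilities (with the inequality reversed for veto). The only cosmetic difference is that the paper phrases the veto case directly in terms of the number of bottom placements $b(\T',c')$ rather than via the explicit cancellation of the total attendance $K$, which is the same observation.
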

\begin{proof}
We begin with plurality. Let $\T = (T_1, \dots, T_n)$ be a voting profile and let $(p_1, \dots, p_n)$ be the probabilities. In the plurality rule, every voter increases the score of a single candidate by 1. Since different voters are independent, the scores of different candidates are independent. Therefore, we get the following (due to the law of total probability).
\begin{align*}
    \Pr[\win c {\T'}] &= \sum_{s=0}^n \Pr[\win c {\T'}\land s(\T', c) = s] \\
    &= \sum_{s=0}^n \Pr[s(\T', c) = s] \cdot\prod_{c' \neq c} \Pr[s(\T', c') \leq s]
\end{align*}
We show that for all $c' \in C$ and $s \in \set{0, \dots, n}$, we can compute $\Pr[s(\T', c') = s]$ in polynomial time via dynamic programming. Then, by summing and multiplying these values, we can compute $\Pr[\win c {\T'}]$.

For a candidate $c'$, a number $t \leq n$ of voters and an integer score $0 \leq y \leq n$, define $N(t, y) \eqdef \Pr[\sum_{i \in I \cap [t]} s(T_i, c') = y]$. In particular, $N(n, s) = \Pr[s(\T', c') = s]$. For $t=0$ we have $N(t, y)=1$ if $y=0$ and $N(t, y)=0$ otherwise.

Let $t \geq 1$. If $c'$ is not ranked first in $T_t$, then the appearance of the voter $v_t$ does not affect the score of $c'$, hence $N(t, y) = N(t-1, y)$. If $c'$ is ranked first in $T_t$, then we consider two cases. If $t \in I$, then for the event $\sum_{i \in I \cap [t]} s(T_i, c') = y$ we need $\sum_{i \in I \cap [t-1]} s(T_i, c') = y-1$; otherwise, we need $\sum_{i \in I \cap [t-1]} s(T_i, c') = y$. Hence, we have the following.
\[ N(t, y) = p_t \cdot N(t-1, y-1) + (1-p_t) \cdot N(t-1, y) \]

For the veto rule, denote by $b(\T', c)$ the (random variable that
holds the) number of voters in $\T'$ who place $c$ at the bottom
position. Note that $b(\T', c)$ and $b(\T', c')$ are independent
for $c \neq c'$. We can write  the following.
\[ \Pr[\win c {\T'}] = \sum_{b=0}^n \Pr[b(\T', c) = b]\cdot \prod_{c' \neq c} \Pr[b(\T', c') \geq b]\]
For $c' \in C$ and $b \in \set{0, \dots, n}$, we can compute
$\Pr[b(\T', c') = b]$ in polynomial time similarly to $\Pr[s(\T', c')
= s]$
under plurality.
\end{proof}

\subsection{Connection to CCAUV}
In the remainder of this section, we show that computing
$\Pr[\win c {\T'}]$ is \sharpp-hard for several other voting rules.
For every voting rule we consider, we have a reduction from \#CCAUV to
computing $\Pr[\win c {\T'}]$, as follows. Let $C$, $\M$, $\Q$ and
$c \in C$ be an instance of \#CCAUV. Define probabilities for the
voters as follows. The voters of $\M$ participate with probability
$1$, and every voter of $\Q$ participates with probability $1/2$.
Denote by $k$ the number of voters of $\Q$ and by $\alpha(\Q, \M)$ the
number of subsets $\Q' \subseteq \Q$ such that $c$ is a winner of
$\M \circ \Q'$. We get that
$\Pr[\win c {\T'}] = 2^{-k} \alpha(\Q, \M)$. We conclude that \sharpp-completeness of \#CCAUV implies  \sharpp-hardness of
computing $\Pr[\win c {\T'}]$. Consequently, in this section the
proofs of our hardness results are essentially showing that \#CCAUV is
\sharpp-complete under the voting rule in consideration.

\subsection{$k$-Approval and $k$-Veto}
We begin with the hardness of $k$-approval for $k>1$ (where $k=1$ is the plurality rule
discussed in Section~\ref{sec:tractable-plurality-veto}).
\begin{theorem}
\label{thm:computeWinApproval}
For every fixed $k > 1$, computing $\Pr[\win c {\T'}]$ is \sharpp-hard under $k$-approval.
\end{theorem}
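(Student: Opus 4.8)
The plan is to work inside the reduction framework set up above: since the $\M$-voters can be made certain and the $\Q$-voters independent with probability $1/2$, computing $\Pr[\win c {\T'}]$ yields the count $\alpha(\Q,\M)$, so it suffices to prove that \#CCAUV is \sharpp-hard under $k$-approval (membership in \sharpp is immediate, as one can guess a sublist $\Q'\subseteq\Q$ and check in polynomial time whether $c$ wins $\M\circ\Q'$). I would reduce from the problem of counting the matchings (of all sizes) of a graph, which is a classical \sharpp-complete problem. The reason this source is convenient is that matchings carry no cardinality restriction, mirroring the fact that \#CCAUV imposes no bound on $\abs{\Q'}$; this absence of a bound is exactly the feature that distinguishes \#CCAUV from the bounded \#CCAV, and I expect it to be the conceptual crux, so I deliberately pick a source problem that does not rely on constraining $\abs{\Q'}$.

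I would first settle $k=2$. Given a graph $G=(W,E)$, introduce a candidate $X_v$ for each vertex $v\in W$, the preferred candidate $c$, and a pool of dummy candidates. The unregistered list $\Q$ contains one voter per edge $e=\set{u,w}$, whose two approvals go to $X_u$ and $X_w$; thus, for a sublist $\Q'$ corresponding to $S\subseteq E$, the edge-voters contribute exactly $\deg_S(v)$ to each $X_v$ and nothing to $c$ or to the dummies. Using the registered list $\M$ (each $\M$-voter approving one intended target together with a fresh dummy), I would fix a base value $B$ so that $s(\M,c)=B$ and $s(\M,X_v)=B-1$ for every $v$, while every dummy stays below $B$. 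Then $s(\M\circ\Q',c)=B$ is constant, $s(\M\circ\Q',X_v)=B-1+\deg_S(v)$, and $c$ is a winner if and only if $\deg_S(v)\le 1$ for all $v$, i.e.\ if and only if $S$ is a matching of $G$. Hence $\alpha(\Q,\M)$ equals the number of matchings of $G$, which completes the reduction; concretely $B=2$ already suffices, keeping the instance of polynomial size.

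Finally I would lift the construction from $k=2$ to arbitrary fixed $k>1$ by padding: each edge-voter keeps $X_u,X_w$ in its top positions and fills the remaining $k-2$ approved slots with dummies that are private to that voter, and the $\M$-voters are padded analogously. Since each such dummy is approved by at most one voter, its score never reaches $B$, so the winner condition still collapses to ``$\deg_S(v)\le 1$ for all $v$'' and the count is unchanged. I expect the only delicate points to be bookkeeping ones: because each voter has exactly $k$ approval slots, one must arrange the base scores and the slot-filling so that every added voter reads off exactly one unit of degree on each of its two endpoints while all auxiliary candidates remain harmless. The private-dummy device handles this, and, crucially, no step invokes a bound on $\abs{\Q'}$, which is why the unbounded nature of \#CCAUV poses no difficulty.
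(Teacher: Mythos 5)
Your proposal is correct and follows essentially the same route as the paper's proof: a reduction from counting (not necessarily perfect) matchings, with one unregistered voter per edge approving the two endpoint candidates, private per-voter dummies filling the remaining $k-2$ approval slots, and base scores arranged so that $c$ wins $\M\circ\Q'$ exactly when the selected edges form a matching. The only cosmetic difference is that the paper uses a single registered voter (giving $c$ base score $1$ and the vertex candidates base score $0$) instead of your base levels $B=2$ and $B-1$, which changes nothing of substance.
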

\begin{proof}
  We show a reduction from counting the (not necessarily perfect)
  matchings in a graph to \#CCAUV under $k$-approval. Given a graph $G = (U, E)$, we
  wish to compute the number of subsets $E' \subseteq E$ such that
  every vertex $u \in U$ is incident to at most one edge of $E'$. This
  problem is known to be
  \sharpp-complete~\cite{DBLP:journals/siamcomp/Valiant79}. Given a
  graph $G = (U, E)$ where $E = \set{e_1, \dots, e_m}$, define a set
  $C$ of candidates by $C = U \cup \set{c, d} \cup F$ where
  $F = F_0 \cup F_1 \cup \dots \cup F_m$ and
  $F_i = \set{f_{i, 1}, \dots, f_{i, k-2}}$. The voting profile is
  $\T = \M \circ \Q$ that we define next.

  The first part, $\M$, consists of a single voter
  $O(\set{c,d} \cup F_0, C \setminus (\set{c,d} \cup F_0))$.  Observe
  that the candidates of $\set{c,d} \cup F_0$ receive a score of 1
  from $\M$, and the other candidates receive 0. The second part is
  $\Q = (Q_1, \dots, Q_m)$. For $i \in [m]$, define
  $Q_i = O(e_i \cup F_i, C \setminus (e_i \cup F_i))$. Observe that
  the candidates of $e_i \cup F_i$ receive a score of 1 from $Q_i$,
  and the other candidates receive $0$. Also observe that here we are using the assumption that $k>1$, as  both endpoints of $e_i$ need to gain $1$ from $Q_i$.

  Let $\Q' \subseteq \Q$ and denote $\T' = \M \circ \Q'$. Since only the voter of $\M$ contributes to the score of $c$ and $d$, we have $s(\T', c) = s(\T', d) = 1$. Every $f \in F$ can get a positive score only from a single voter, hence $s(\T', f) \leq 1$. If $c$ is a winner, then $s(\T', u) \leq 1$ for all $u \in U$, and then $\set{e_i : Q_i \in \Q'}$ is a matching in $G$ because each voter $Q_i$ contributes a score of 1 to the vertices of $e_i$.  Conversely, if $E' \subseteq E$ is a matching in $G$, then $c$ is a winner of $\M \circ \set{Q_e}_{e \in E'}$.
  
Overall, the number of subsets $\Q' \subseteq \Q$ such that $c$ is a winner of $\M \circ \Q'$ is the number of matchings in $G$, as required. \end{proof}

Next, by a similar reduction to the proof of Theorem~\ref{thm:computeWinApproval}, we obtain hardness of $k$-veto for $k>1$ (where $k=1$ is the veto rule
discussed in Section~\ref{sec:tractable-plurality-veto}).
\def\thmcomputeWinVeto{For every fixed $k > 1$, computing $\Pr[\win c {\T'}]$ is \sharpp-hard under $k$-veto.}
\begin{theorem}
\label{thm:computeWinVeto}
\thmcomputeWinVeto
\end{theorem}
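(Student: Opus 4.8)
The plan is to mirror the proof of Theorem~\ref{thm:computeWinApproval}: I will reduce counting matchings to \#CCAUV under $k$-veto and then invoke the reduction from \#CCAUV to computing $\Pr[\win c {\T'}]$ described in the ``Connection to CCAUV'' paragraph. The starting point is a simple identity for $k$-veto. If $N$ voters are present, each contributes $1$ to every candidate except the $k$ it ranks last, so the score of a candidate $x$ equals $N - \beta(x)$, where $\beta(x)$ is the number of present voters placing $x$ among their bottom $k$. Consequently $c$ is a co-winner if and only if $\beta(c) \le \beta(x)$ for every $x$, i.e.\ $c$ is vetoed by the fewest voters.

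This identity reveals the one genuine difference from the approval case, and it is also the main obstacle: vetoing a candidate \emph{increases} $\beta$, so the winning condition becomes a family of \emph{lower} bounds on how often candidates are vetoed, whereas the approval reduction exploited \emph{upper} bounds on accumulated score (``every vertex is covered by at most one selected edge''). A naive translation would therefore detect an edge \emph{cover} rather than a matching. To recover matchings I would count them through the complementation bijection $\Q' \mapsto \set{e_i : Q_i \notin \Q'}$ on subsets of $E$: I build the instance so that $c$ wins exactly when the \emph{unselected} edges form a matching, whence the number of good $\Q'$ equals the number of matchings of $G$.

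Concretely, given $G = (U, E)$ with $E = \set{e_1, \dots, e_m}$ and maximum degree $\Delta$, I take candidates $C = U \cup \set{c} \cup F$ with disjoint dummy blocks $F_i$ of size $k-2$ (so that $\abs{e_i \cup F_i} = k$; here the hypothesis $k>1$ is used, since both endpoints of $e_i$ must fit among the bottom $k$). The unregistered profile is $\Q = (Q_1, \dots, Q_m)$ with $Q_i = O(C \setminus (e_i \cup F_i),\, e_i \cup F_i)$, so that $Q_i$ vetoes exactly $e_i \cup F_i$. The registered profile $\M$ is used only to install a prescribed ``base'' vector of veto counts: $c$ is vetoed exactly $\Delta$ times, each vertex $u$ is vetoed exactly $\Delta - \deg(u) + 1$ times (a nonnegative, polynomially bounded number), and every dummy is vetoed at least $\Delta$ times. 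Realizing any such target vector is routine, since each registered voter vetoes $k$ candidates and the ample supply of dummies absorbs the slack.

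With this setup $\beta(c) = \Delta$ (edges never veto $c$), $\beta(u) = (\Delta - \deg(u) + 1) + \mathrm{inc}(u)$ where $\mathrm{inc}(u)$ is the number of selected edges incident to $u$, and every dummy keeps $\beta \ge \Delta$ regardless of the selection. Hence $c$ wins if and only if $\mathrm{inc}(u) \ge \deg(u) - 1$ for every $u$, i.e.\ at most one edge incident to $u$ is left unselected, which is precisely the statement that the unselected edges form a matching. Thus the number of $\Q' \subseteq \Q$ for which $c$ wins equals the number of matchings of $G$, which is \sharpp-complete to compute~\cite{DBLP:journals/siamcomp/Valiant79}; this yields \sharpp-completeness of \#CCAUV under $k$-veto and, through the stated reduction, \sharpp-hardness of computing $\Pr[\win c {\T'}]$. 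The step needing the most care is the bookkeeping of $\M$: I must verify that the base veto counts are realizable exactly and that no dummy or vertex can ever drop below $\beta(c)$ for an unintended reason.
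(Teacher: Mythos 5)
Your reduction of the $k$-veto winning condition to veto counts ($c$ co-wins iff $\beta(c)\le\beta(x)$ for all $x$) is correct, and the complementation idea---making $c$ win exactly when the \emph{unselected} edges form a matching---would indeed count matchings. The genuine gap is precisely the step you flagged: the registered profile $\M$ with exact base demands $\beta_\M(c)=\Delta$ and $\beta_\M(u)=\Delta-\deg(u)+1$ is \emph{not} always realizable, and for $k=2$ it can be outright impossible. When $k=2$ your dummy blocks $F_i$ are empty, so the candidate set is $U\cup\set{c}$ and every registered voter vetoes exactly two distinct candidates of $U\cup\set{c}$; hence the total number of vetoes contributed by $\M$ is even. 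But your demands are rigid up to a uniform shift (you need $\beta_\M(c)-\beta_\M(u)=\deg(u)-1$ \emph{exactly}, since otherwise the equivalence ``$c$ wins iff the unselected edges form a matching'' breaks in one direction or the other), and their total can be odd for every shift: for the triangle ($n=3$, $m=3$, all degrees $2$), taking $\beta_\M(c)=2+t$ forces the demand sum $(2+t)+3(t+1)=4t+5$, which is odd for every $t$. So no such $\M$ exists, while the theorem claims hardness for every fixed $k>1$, including $k=2$. For $k\ge 3$ the ``dummies absorb the slack'' plan can be carried out (e.g., one voter per demanded veto, vetoing the target plus $k-1$ dummies, plus voters vetoing $k$ dummies to top them up), but for $k=2$ the construction itself must change---say, by adding a few global dummy candidates that $\M$-voters can pair targets with. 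That repair is missing, and ``routine'' it is not.

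The paper avoids this entirely by embracing the direction you yourself identified as the natural one for veto: it reduces from counting \emph{edge covers} (also \sharpp-complete~\cite{DBLP:journals/rsa/BordewichDK08}) instead of fighting to recover matchings. There, $\M$ is a \emph{single} voter $O(U,\set{c,d}\cup F)$---note the extra candidate $d$, inserted exactly so that the bottom block $\set{c,d}\cup F$ has size $k$---and each $Q_i$ vetoes $e_i\cup F$, where $F$ is one shared block of $k-2$ dummies. Then $b(\T',c)=b(\T',d)=1$, every dummy has $b(\T',f)\ge 1$, and $c$ wins iff every vertex is vetoed at least once, i.e., iff the selected edges form an edge cover. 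All base veto demands are $0$ or $1$, so no realizability bookkeeping arises, and the argument is uniform over all $k\ge 2$. To salvage your route you must either prove the realizability claim for all $k>1$ (repairing $k=2$ with extra candidates) or switch the source problem to edge covers as the paper does.
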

\begin{proof}
  We show a reduction from the problem of counting the edge covers in
  a graph to \#CCAUV under $k$-veto. Formally, given a graph $G = (U, E)$, the goal is to
  compute the number of subsets $E' \subseteq E$ such that every
  vertex $u \in U$ is incident to at least one edge of $E'$. This
  problem is known to be
  \sharpp-complete~\cite{DBLP:journals/rsa/BordewichDK08}.

  Given a graph $G = (U, E)$ where $E = \set{e_1, \dots, e_m}$, define
  a set $C$ of candidates by $C = U \cup \set{c, d} \cup F$ where
  $F = \set{f_1, \dots, f_{k-2}}$. The voting profile
  $\T = \M \circ \Q$ consists of $m+1$ voters. The first part, $\M$,
  consists of a single voter $O(U, \set{c, d} \cup F)$. Note that the candidates of $\set{c, d} \cup F$ receive a score of 0 from $\M$, and the other candidates receive 1. The second
  part is $\Q = (Q_1, \dots, Q_m)$ where
  $Q_i = O(C \setminus (e_i \cup F), e_i \cup F)$. Note that the
  candidates of $e_i \cup F$ receive a score of 0 from $Q_i$, and the
  other candidates receive 1.

  Let $\Q' \subseteq \Q$ and denote $\T' = \M \circ \Q'$. For every
  $c' \in C$, let $b(\T', c')$ be the number of voters in $\T'$ that
  rank $c'$ among the bottom $k$ positions. Note
  that $c$ is
  a winner in $\T'$ if and only if $b(\T', c) \leq b(\T', c')$ for all
  $c' \neq c$. We know that $b(\T', c) = b(\T', d) = 1$ and
  $b(\T', f) \geq 1$ for every $f \in F$. For $c$ to be a winner, we
  need $b(\T', u) \geq 1$ for all $u \in U$. Each voter $Q_i$ ranks
  the candidates of $e_i \cup F$ at the bottom $k$ positions, hence
  $b(\T', u) \geq 1$ for all $u \in U$ if and only if
  $\set{e : Q_e \in \Q_e}$ is an edge cover in $G$.

  Conversely, if $E' \subseteq E$ is an edge cover, then $c$ is a
  winner of $\M \circ \set{Q_e}_{e \in E'}$. Overall, the number of
  subsets $\Q' \subseteq \Q$ such that $c$ is a winner of
  $\M \circ \Q'$ is the number of edge covers in $G$.
\end{proof}

\subsection{Additional Positional Scoring Rules}
Next, we consider additional positional scoring rules: the Borda rule $(m-1, m-2, \dots, 0)$ and the ones of the form $R(f, \ell)$.  We use a technique of Dey and Misra~\cite{DBLP:conf/mfcs/DeyM17}.

\begin{lemma}[\cite{DBLP:conf/mfcs/DeyM17}]
\label{lemma:fixingScores}
Let $C = \set{c_1, \dots, c_m} \cup D$ be a set of candidates, where
$D$ is nonempty, and $\Vec{s}_{|C|}$ a normalized scoring
vector (i.e., the greatest common divisor of the scores is one). For every
$\Vec{x} = (x_1, \dots, x_m) \in \integer^m$, there exists
$\lambda \in \natural$ and a voting profile $\M$ such that
$s(\M, c_i) = \lambda + x_i$ for $i \in [m]$ and $s(\M, d) < \lambda$
for all $d \in D$. Moreover, the number of votes in $\M$ is polynomial
in $|C| \cdot \sum_{i=1}^m |x_i|$.
\end{lemma}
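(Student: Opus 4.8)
The plan is to assemble $\M$ from two parts: a large, symmetric \emph{background} that equalizes the scores of $c_1,\dots,c_m$ at one common high value while keeping every candidate of $D$ strictly below them, and a family of local gadgets that raise the individual $c_i$ by the prescribed offsets. First I would simplify the target: since adding a constant to all of $x_1,\dots,x_m$ can be compensated by a matching change of $\lambda$, we may assume $x_i\ge 0$ for every $i$, with $\sum_i x_i$ altered only by a polynomial factor. Write $M=|C|$, abbreviate $s_j\eqdef\Vec{s}_{|C|}(j)$, and set $\sigma=\sum_{j=1}^{M}s_j$ and $g_j=s_j-s_{j+1}\ge 0$ for the consecutive gaps. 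The operative consequence of normalization that I will use is that $\gcd_j g_j=1$ (this holds for all the concrete rules, e.g.\ Borda and $R(f,\ell)$, to which we apply the lemma).

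The key tool is a one-unit adjuster. For a pair of candidates $a,b$ I would construct two profiles $U^{+}_{a,b}$ and $U^{-}_{a,b}$ of \emph{equal} size that assign \emph{identical} scores to every candidate other than $a$ and $b$, and satisfy $s(U^{+}_{a,b},a)-s(U^{-}_{a,b},a)=1$ and $s(U^{+}_{a,b},b)-s(U^{-}_{a,b},b)=-1$. The building block is a single swap: placing $a,b$ at adjacent ranks $j,j+1$ with all other candidates frozen, and swapping them, shifts $s(a)-s(b)$ by exactly $g_j$ and changes no other score. Because $\gcd_j g_j=1$, Bézout yields integers $\alpha_j$ with $\sum_j\alpha_j g_j=1$ and $|\alpha_j|\le s_1$, so grouping the positive and negative coefficients into the two profiles realizes a net shift of exactly $1$, using $O(M\cdot s_1)$ votes. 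The crucial payoff is that replacing one copy of $U^{-}_{a,b}$ by $U^{+}_{a,b}$ inside a larger profile raises $s(a)$ by $1$, lowers $s(b)$ by $1$, and disturbs nothing else.

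With this tool the construction is: (i) take $T$ copies of the balanced block that cycles $\set{c_1,\dots,c_m}$ among the top $m$ ranks and cycles $D$ among the bottom $|D|$ ranks over all $m\cdot|D|$ combinations; this gives each $c_i$ a common value $\beta_c$ and each $d\in D$ a common value $\beta_d$, with $\beta_c-\beta_d>0$ growing linearly in $T$. Next (ii), fix a reference $d_0\in D$ and insert, for every $i$, exactly $\max_k x_k$ copies of a \emph{common symmetric template} of $U^{-}_{c_i,d_0}$, the template chosen (by rotating the untouched candidates) so that these reservoir votes contribute the same amount to each $c_i$ and a common amount to each dummy. The baseline then still has $c_1,\dots,c_m$ at one value, call it $\lambda$, with all dummies below it once $T$ is large. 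Finally (iii), for each $i$ convert $x_i$ of the reservoir copies from $U^{-}_{c_i,d_0}$ to $U^{+}_{c_i,d_0}$; this moves $s(\M,c_i)$ to $\lambda+x_i$ and only pushes $d_0$ further down, leaving all other scores intact.

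It remains to set the offset to this common $\lambda$ and check $s(\M,c_i)=\lambda+x_i$ and $s(\M,d)<\lambda$ for all $d\in D$, taking $T$ (hence $\beta_c-\beta_d$) polynomially large to dominate the bounded reservoir perturbations, and using the slack in the dummy scores to meet the single divisibility constraint $\sum_{c}s(\M,c)=|\M|\cdot\sigma$. The vote count is $O(T\,m|D|)$ for the background plus $O(m\cdot\max_k x_k\cdot M s_1)$ for the reservoir, which is $\poly(|C|,s_1,\sum_i|x_i|)$, hence polynomial in $|C|\cdot\sum_i|x_i|$ whenever $s_1=O(|C|)$, as in all our applications. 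The main obstacle is step (ii): ensuring the adjusters and the reservoir do not perturb the relative scores of the candidates we are not currently targeting, especially the other $c_k$. This is exactly what the ``identical on all other candidates'' property of $U^{\pm}_{a,b}$ and the symmetric template provide, and it is the only place where the normalization $\gcd_j g_j=1$ is genuinely needed.
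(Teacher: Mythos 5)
Note first that the paper itself contains \emph{no} proof of this lemma: it is imported as a black box from Dey and Misra \cite{DBLP:conf/mfcs/DeyM17}, so your construction has to stand on its own. It does, essentially, and it follows the same general scheme as the known proofs of such score-adjusting lemmas (Betzler--Dorn, Baumeister--Rothe, Dey--Misra): large cyclic blocks that equalize the $c_i$ and push the dummies strictly below, plus B\'ezout combinations of adjacent-swap vote pairs to realize unit score shifts that touch only the two candidates involved. Moreover, your two side remarks are not hedging but genuinely necessary corrections to the literal statement. ``Gcd of the scores is one'' is not the right hypothesis: for the vector $(5,3,1)$ the gcd of the scores is $1$, yet every candidate's score has the parity of the number of cast votes, so a score difference of exactly $1$ is unrealizable; the correct condition is exactly your $\gcd_j g_j=1$ on consecutive differences (equivalently, gcd $1$ after subtracting the last entry), which is what ``normalized'' means in the cited source. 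Likewise, the vote count cannot in general be independent of the score magnitudes: for the normalized vector $(N+2,2,0)$ with $N$ odd, any profile with $s(\M,c_1)-s(\M,c_2)=1$ must have first-position counts of $c_1,c_2$ differing by an odd number, forcing the second-position counts to differ by at least $(N+1)/2$, so at least $(N+1)/2$ votes are needed. Hence a bound of the form $\poly(|C|, s_1, \sum_i|x_i|)$ is the correct shape of the true statement, and it matches the paper's claimed bound precisely in its applications (Borda, $R(f,\ell)$, $k$-approval/veto), where $s_1$ is polynomial in $|C|$.

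The one step you must tighten is (ii). ``Rotating the untouched candidates'' has to mean rotating the bystander \emph{roles across} the $m$ instantiations of the common template --- candidate $c_{(i+t) \bmod m}$ plays bystander slot $t$ in the adjuster targeting $c_i$ --- so that summing over $i$ gives every $c_k$ the same baseline contribution (one target-slot contribution plus the sum of all bystander-slot contributions). If instead you symmetrize \emph{within} a single adjuster family, replacing each vote by all rotations of its bystanders, then $U^{+}_{c_i,d_0}$ and $U^{-}_{c_i,d_0}$ no longer differ by $1$ on the target but by the number of rotations, and the conversion step can only realize multiples of that number. Alternatively, the baseline symmetry is dispensable: the reservoir contributions $R_k$ to each $c_k$ are computable and polynomially bounded, so one can enlarge the reservoir and increase the number of conversions for $c_k$ by $\max_j R_j - R_k$. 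With either fix the argument goes through: conversions move score only from $d_0$ to the targeted $c_i$, the dummies stay strictly below $\lambda$ once $T$ is polynomially large, and the vote count remains polynomial as claimed.
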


We first show hardness for the Borda rule, using Lemma~\ref{lemma:fixingScores}.

\begin{theorem}
\label{thm:computeWinBorda}
Computing $\Pr[\win c {\T'}]$ is \sharpp-hard under Borda.
\end{theorem}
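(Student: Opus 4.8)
The plan is to follow the same strategy as the $k$-approval and $k$-veto hardness proofs: reduce a known \sharpp-complete counting problem to \#CCAUV under Borda, and then invoke the generic reduction from \#CCAUV to computing $\Pr[\win c {\T'}]$ described in Section~\ref{sec:exact}. The natural candidate source problem is again counting matchings (or edge covers) in a graph, since those reductions encode a combinatorial structure precisely through the constraint that $c$ be a winner. So first I would fix a \sharpp-complete problem such as counting matchings in a graph $G = (U, E)$, and aim to design a profile $\T = \M \circ \Q$ over a candidate set built from the vertices of $G$ together with the preferred candidate $c$ and some auxiliary candidates.

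The key leverage, and what distinguishes the Borda argument from the binary-score cases, is Lemma~\ref{lemma:fixingScores}. The difficulty with Borda is that its score vector $(m-1, m-2, \dots, 0)$ is \emph{dense}: every voter contributes a different amount to every candidate, so a single vote cannot cleanly add $1$ to exactly the endpoints of an edge the way it does under $k$-approval. To route around this, I would use the unregistered profile $\Q$ to encode the graph: for each edge $e_i$, a voter $Q_i$ would be crafted so that participating in the election raises the Borda scores of the two endpoints of $e_i$ (relative to a baseline) in a controlled way, while its effect on all other candidates is absorbed by the auxiliary structure. The registered profile $\M$ would then be built via Lemma~\ref{lemma:fixingScores} to \emph{pre-load} the candidate scores so that $c$ sits exactly at the winning threshold: each vertex candidate $u$ is placed just far enough below $c$ that it can tolerate the score boost from at most one incident edge, and $c$ is a winner precisely when the chosen edge subset forms a matching. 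Since Lemma~\ref{lemma:fixingScores} gives a profile polynomial in $|C| \cdot \sum_i |x_i|$, and the target offsets $x_i$ are polynomially bounded (the per-candidate contributions from $\Q$ are at most $O(n)$ and the number of edges is $m$), the whole construction stays polynomial.

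Concretely, I would set the target scores so that, after $\M$, every relevant candidate is at a common level $\lambda$ up to a small per-candidate offset, arrange the $Q_i$'s so that each contributes one fixed increment to each of its two endpoints and a uniform (or dominated) amount elsewhere, and then choose the offsets so that $c$ beats a vertex candidate $u$ if and only if at most one edge incident to $u$ is selected. The equivalence to run is then: $c$ is a winner of $\M \circ \Q'$ if and only if $\set{e_i : Q_i \in \Q'}$ is a matching, so $\alpha(\Q, \M)$ equals the number of matchings in $G$; combined with $\Pr[\win c {\T'}] = 2^{-k}\alpha(\Q,\M)$ from the generic reduction, this yields \sharpp-hardness.

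The main obstacle I expect is the bookkeeping of the Borda scores: because every vote perturbs every candidate's score, I must verify that the net effect of $\Q'$ on the \emph{non-vertex} candidates (the auxiliary ones and $d$-type candidates) never lets them overtake $c$, regardless of which matching is chosen, and that the dependence of a vertex's total on the number of incident selected edges is exactly monotone with a clean threshold between ``at most one'' and ``two or more.'' Getting Lemma~\ref{lemma:fixingScores} to set the baseline so that this single threshold does all the work — rather than having cascading interactions between candidates — is the delicate part, and I would handle it by keeping the per-edge increments uniform across endpoints and pushing all the irregularity into the precomputed offsets $x_i$ supplied by $\M$.
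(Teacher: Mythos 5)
Your proposal follows essentially the same route as the paper's proof: a reduction from counting matchings to \#CCAUV under Borda, with Lemma~\ref{lemma:fixingScores} pre-loading the scores so that $c$ sits exactly at the threshold separating vertices incident to at most one selected edge from those incident to two or more, and a block of auxiliary candidates absorbing the density of the Borda vector. The quantitative instantiation you defer is exactly what the paper fills in: it takes $|F| = n^3 - n + 1$ filler candidates (so the scoring vector is $(n^3+1, n^3, \dots, 0)$), places each edge voter's endpoints on top, then $F$, then the other vertices, then $c$ last, so endpoints gain at least $n^3$ while non-endpoints gain at most $n-2$ per vote, and sets $s(\M,u) = \lambda + n^5$ for vertices versus $s(\M,c) = \lambda + n^5 + 2n^3 - 1$, which makes the threshold argument go through as you outline.
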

\begin{proof}
  We show a reduction from counting the 
  matchings in a graph, as defined in the proof of
  Theorem~\ref{thm:computeWinApproval}, to \#CCAUV under
  Borda. Given a graph $G = (U, E)$ where $U = \set{u_1, \dots, u_n}$, define a set $C$ of candidates by
  $C = U \cup \set{c} \cup F$ where
  $F = \set{f_1, \dots, f_{n^3-n+1}}$. Note that $|C| = n^3 + 2$, hence the scoring vector is $(n^3+1, n^3, \dots, 0)$. The
  voting profile is $\T = \M \circ \Q$, as explained next.

  The first part, $\M$, is the profile that exists by
  Lemma~\ref{lemma:fixingScores} such that (for some constant
  $\lambda > 0$ from the lemma):
\begin{itemize}
    \item For every $f \in F$ we have $s(\M, f) < \lambda$.
    \item For every $u \in U$ we have $s(\M, u) = \lambda + n^5$.
    \item $s(\M, c) = \lambda + n^5 + 2n^3 - 1$.
\end{itemize}
The second part is $\Q = \set{Q_e}_{e \in E}$. For every edge
$e = \set{u,w} \in E$, define
$Q_e \eqdef O(e, F, U \setminus e, \set{c})$. From the construction we can see the following. First, the
scores of the candidates of $e$ satisfy $s(Q_e, u) \geq n^3$. Second,
for every other vertex $u' \in U \setminus e$ we have
$s(Q_e, u') \leq n-2$. Third, for every $f \in F$ we have
$s(Q_e, f) < n^3$. Finally, $s(Q_e, c) = 0$.

Let $E' \subseteq E$ be a set of edges.  For every $u \in U$, denote
by $\deg_{E'}(u)$ the number of edges of $E'$ incident to $u$. Define
$\Q' = \set{Q_e}_{e \in E'}$ and $\T' = \M \circ \Q'$. Since every
voter of $\Q'$ contributes a score of $0$ to $c$, we have
$s(\T', c) = s(\M, c) = \lambda + n^5 + 2n^3 - 1$. For every $f \in F$, we have
the following since $|E| \leq n^2$:
\begin{align*}
    s(\T', f) < \lambda + |E| \cdot n^3 \leq \lambda + n^5 < s(\T', c)
\end{align*}
For $u \in U$, if $\deg_{E'}(u) \leq 1$ then $u$ gains at most $n^3+1$
from edges that cover it, and at most $|E|(n-2)$ from the other edges of
$E'$. Overall,
\begin{align*}
    s(\T', u) &\leq n^3+1 + |E|(n-2) + s(\M, u) \leq n^3+1 + n^2(n-2) + \lambda + n^5 < s(\T', c)\,.
\end{align*}
Otherwise, if $\deg_{E'}(u) \geq 2$, then we have
\begin{align*}
    s(\T', u) \geq 2n^3 + s(\M, u) = \lambda + n^5 + 2n^3 > s(\T', c)\,.
\end{align*}
We can deduce that $c$ is a winner of $\T'$ if and only if
$\deg_{E'}(u) \leq 1$ for every $u \in U$, that is, $E'$ is a
matching. Since there is a
bijection between the subsets $E' \subseteq E$ and the sub-profiles
$\T' = \M \circ \Q'$, we get the correctness of the reduction.
\end{proof}

The next theorem states hardness for all positional scoring
rules of the form $R(f, \ell)$, except for the rule $R(1, 1)$ with the
scoring vector $(2, 1, \dots, 1, 0)$. The $R(1, 1)$ rule, which got a
considerable attention in the context of the possible-winner
problem~\cite{DBLP:journals/ipl/BaumeisterR12}, remains an open
problem. The proof
again uses Lemma~\ref{lemma:fixingScores}.

\def\thmcomputeWinThreeValues{ Computing $\Pr[\win c {\T'}]$ is \sharpp-hard under $R(f, \ell)$ whenever $(f, \ell) \neq (1, 1)$. }
\begin{theorem}
\label{thm:computeWinThreeValues}
\thmcomputeWinThreeValues
\end{theorem}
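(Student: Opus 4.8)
The plan is to reuse the reduction from \#CCAUV described at the start of this section: it suffices to show that \#CCAUV is \sharpp-complete under $R(f,\ell)$. Since $(f,\ell)\neq(1,1)$ is equivalent to $f\geq 2$ or $\ell\geq 2$, I would treat these two cases by dual constructions, reducing from counting matchings (the \sharpp-complete problem used in the proof of Theorem~\ref{thm:computeWinApproval}) when $f\geq 2$, and from counting edge covers (used in the proof of Theorem~\ref{thm:computeWinVeto}) when $\ell\geq 2$. Throughout, $R(f,\ell)$ has scores in $\set{0,1,2}$ with greatest common divisor $1$, so Lemma~\ref{lemma:fixingScores} applies.

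For $f\geq 2$, given $G=(U,E)$ with $U=\set{u_1,\dots,u_n}$ I would take $C = U\cup\set{c}\cup F$, where $F$ contains $f-2$ private top-fillers for each edge and $\ell$ shared bottom-fillers. Using Lemma~\ref{lemma:fixingScores} I fix $\M$ so that $s(\M,c)=\lambda+1$, $s(\M,u)=\lambda$ for every $u\in U$, and $s(\M,\cdot)<\lambda$ on $F$. Each edge $e=\set{u,w}$ gives a voter $Q_e$ that ranks $u,w$ and the private top-fillers of $e$ in the top $f$ positions, the shared fillers in the bottom $\ell$ positions, and all remaining candidates---importantly including $c$---in the middle; a direct count shows the positions match. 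Placing $c$ in the middle is the key idea: it makes $s(\T',c)=\lambda+1+|E'|$, whereas $s(\T',u)=\lambda+|E'|+\deg_{E'}(u)$, so the cardinality term $|E'|$ cancels and $c$ is a winner iff $\deg_{E'}(u)\leq 1$ for all $u$, i.e. iff $E'=\set{e:Q_e\in\Q'}$ is a matching. One still checks that no filler beats $c$: shared bottom-fillers stay below $\lambda$, and a private top-filler scores $2$ from its own voter and $1$ from the rest, reaching at most $\lambda+|E'|<s(\T',c)$.

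For $\ell\geq 2$ I would use the dual construction, reducing from counting edge covers. Now $F$ holds $f$ private top-fillers and $\ell-2$ private bottom-fillers per edge, and Lemma~\ref{lemma:fixingScores} is invoked with the roles swapped, $s(\M,u)=\lambda+1$ and $s(\M,c)=\lambda$. Each $Q_e$ ranks its private top-fillers at the top, the endpoints $u,w$ together with its private bottom-fillers at the bottom (this is where $\ell\geq 2$ is needed), and $c$ with the other vertices in the middle. Then $s(\T',c)=\lambda+|E'|$ and $s(\T',u)=\lambda+1+|E'|-\deg_{E'}(u)$, so an uncovered vertex strictly beats $c$ while a covered one does not; hence $c$ is a winner iff $E'$ is an edge cover. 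The private top-fillers are the only ones that could threaten $c$, but with $s(\M,c)=\lambda$ they reach at most $\lambda+|E'|=s(\T',c)$ and merely tie.

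The main obstacle, compared with the Borda reduction, is that the three-valued range bounds every per-voter score gap by $2$, so one cannot separate covered from uncovered vertices by a polynomial margin as the Borda proof does. The device that makes the argument go through is putting $c$ in the middle so that its score tracks $|E'|$ and the varying cardinality cancels, combined with private fillers that prevent any filler from receiving the score $2$ more than once. Once these are in place, correctness reduces to verifying, in each case, two elementary inequalities (every filler stays at most tied with $c$, and the degree threshold characterises matchings resp.\ edge covers), and the bijection $E'\leftrightarrow\Q'$ then gives the reduction.
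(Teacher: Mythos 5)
Your proof is correct, but it takes a genuinely different route from the paper's. The paper does not redo the matching/edge-cover constructions: for $f>1$ it reduces the winning-probability problem under $f$-approval (Theorem~\ref{thm:computeWinApproval}) to the one under $R(f,\ell)$ by appending $\ell$ dummy candidates at the bottom of every vote, so each original candidate's per-voter score increases by exactly $1$ while the dummies get $0$, and the winning probabilities are literally preserved; for $\ell>1$ it reduces from $\ell$-veto (Theorem~\ref{thm:computeWinVeto}) by prepending $f$ dummies at the top, invoking Lemma~\ref{lemma:fixingScores} only to add a deterministic profile $\M$ that keeps the dummies permanently out of contention, so again the probabilities coincide. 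You instead inline the underlying combinatorial reductions: you rebuild the matching reduction (for $f\geq 2$) and the edge-cover reduction (for $\ell\geq 2$) directly as \#CCAUV instances under $R(f,\ell)$, using Lemma~\ref{lemma:fixingScores} to set base scores and placing $c$ in the middle band so that its score grows by exactly $\abs{E'}$ and the cardinality term cancels against the vertices' scores. Your accounting is sound: the degree thresholds characterize matchings resp.\ edge covers, the fillers reach at most $\lambda+\abs{E'}$ (using integrality, $s(\M,\cdot)<\lambda$ gives $s(\M,\cdot)\leq\lambda-1$), and the tie of a private top-filler with $c$ in the $\ell\geq 2$ case is harmless under co-winner semantics. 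The trade-off: the paper's padding argument is shorter and more modular, transferring hardness from $f$-approval and $\ell$-veto as black boxes (so any strengthening of those results transfers automatically), whereas your construction is self-contained and exhibits an explicit parsimonious correspondence between matchings (resp.\ edge covers) and the winning sub-profiles under $R(f,\ell)$, at the cost of re-verifying the score arithmetic that the padding approach avoids.
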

\begin{proof}
  First, consider the case where $f > 1$. We show a reduction from
  computing $\Pr[\win c {\T'}]$ under $f$-approval, which is
  \sharpp-hard by Theorem~\ref{thm:computeWinApproval}, to
  computing $\Pr[\win c {\T'}]$ under $R(f, \ell)$. Let
  $\T_1 = (T_1^1, \dots, T_n^1)$, be an instance for $f$-approval with
  probabilities $(p_1, \dots, p_n)$ over a set $C$ of
  candidates. Define a instance under $R(f, \ell)$ with candidate set
  $C' = C \cup D$ where $D = \set{d_1, \dots, d_\ell}$. The voters are
  $\T_2 = (T_1^2, \dots, T_n^2)$ where
  $T_i^2 = T_i^1 \circ (d_1, \dots, d_\ell)$ for every $i \in
  [n]$. The probabilities are $(p_1, \dots, p_n)$.

Observe that for every $i \in [n]$, the candidates of $D$ receive a score of $0$ from $T_i^2$, and for every $c \in C$ the score is $s(T_i^2, c) = s(T_i^1, c) + 1$. Since the probabilities are the same as in the instance under $f$-approval, we can deduce for every $c \in C$, the probability that $c$ is a winner of $\T_1'$ under $f$-approval is the same as the probability that $c$ is a winner of $\T_2'$ under $R(f,\ell)$.  

Next, assume that $\ell > 1$. For this case we show a reduction from
computing $\Pr[\win c {\T'}]$ under $\ell$-veto, which is \sharpp-hard
by Theorem~\ref{thm:computeWinVeto}, to computing $\Pr[\win c {\T'}]$
under $R(f, \ell)$. Let $\T_1 = (T_1^1, \dots, T_n^1)$ be an instance
for $\ell$-veto with probabilities $(p_1, \dots, p_n)$ over a set $C$
of candidates. Define a instance under $R(f, \ell)$ with the candidate set
$C' = C \cup D$ where $D = \set{d_1, \dots, d_f}$.

The voting profile $\T_2 = (T_1^2, \dots, T_n^2) \circ \M$ consists of
two parts. For the first part, for every $i \in [n]$ define
$T_i^2 = (d_1, \dots, d_f) \circ T_i^1$ and the probability is
$p_i$. The second part, $\M$, is obtained by applying
Lemma~\ref{lemma:fixingScores}, where (for some constant $\lambda > 0$
from the lemma) we have $s(\M, d) < \lambda$ for every $d \in D$ and
$s(\M, c) = \lambda + 3n$ for every $c \in C$. All voters of $\M$
appear with probability 1.

For every subset $A \subseteq [n]$, define a profile
$\T_2^A = \set{T_i^2}_{i \in A} \circ \M$. For every $d \in D$ the
score satisfies $s(\T_2^A, d) < 2|A| + \lambda \leq \lambda + 2n$ and
for every $c \in C$ we get $s(\T_2^A, c) \geq 0 + \lambda +
3n$. Therefore, every candidate in $C$ always defeats all candidates
in $D$. Furthermore, for every $i \in [n]$ and $c \in C$ we have
$s(T_i^2, c) = s(T_i^1, c)$. Overall, for every $c \in C$, the probability that $c$ is a winner of $\T_1'$ under $\ell$-veto equals the probability that $c$ is a winner of $\T_2'$ under $R(f,\ell)$. 
\end{proof}

\subsection{Condorcet and Maximin}
So far, we considered the complexity of computing the probability of
winning only for positional scoring rules. Next, we show hardness of
two rules of a different type: Condorcet and Maximin.  We begin with
the former.

\begin{theorem}
\label{thm:computeWinCondorcet}
Computing $\Pr[\win c {\T'}]$ is \sharpp-hard under Condorcet.
\end{theorem}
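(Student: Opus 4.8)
The plan is to leverage the connection to \#CCAUV established above: it suffices to prove that \#CCAUV is \sharpp-hard under Condorcet, and I will do so by reducing from \#CCAV under Condorcet, which is \sharpp-complete by Wojtas and Faliszewski~\cite{DBLP:conf/aaai/WojtasF12}. Recall that a \#CCAV instance consists of candidates $C$, a registered profile $\M$, an unregistered profile $\Q$, a preferred candidate $c$, and a bound $k$, and we must count the subsets $\Q' \subseteq \Q$ with $|\Q'| \le k$ for which $c$ is a Condorcet winner of $\M \circ \Q'$. Since \#CCAUV counts subsets of \emph{every} cardinality, the whole difficulty is to emulate the bound $k$ by purely pairwise means inside a Condorcet instance. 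Note that, in contrast with the positional-scoring reductions above, a Condorcet voter is a full linear order and therefore shifts \emph{every} pairwise margin by $\pm 1$; the construction must be designed so that this coupling does not corrupt the delicate margins that carry the original hardness.

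The key idea is to introduce a single \emph{gate} candidate $x$ whose sole purpose is to force $|\Q'| \le k$. In the new instance over $C \cup \set{x}$, every unregistered voter ranks $x$ immediately above $c$ and leaves the relative order of the candidates of $C$ unchanged; thus each selected unregistered voter decreases the margin $N_{\T'}(c,x) - N_{\T'}(x,c)$ by exactly one, while the pairwise margins among the candidates of $C$ --- in particular every margin of the form $c$ versus an original candidate --- are identical to those in the \#CCAV instance. I then pad the registered profile so that, with no unregistered voter added, this gate margin equals $k+1$. Consequently, after adding $\Q'$ the gate margin is $(k+1) - |\Q'|$, so $c$ beats $x$ strictly if and only if $|\Q'| \le k$.

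Putting these together, $c$ is a Condorcet winner of $\M \circ \Q'$ in the enlarged instance if and only if $c$ beats every original candidate --- which is exactly the Condorcet-winner condition of the \#CCAV instance --- \emph{and} $c$ beats the gate $x$, i.e.\ $|\Q'| \le k$. Hence the subsets $\Q' \subseteq \Q$ that make $c$ a Condorcet winner in the new instance are precisely the sublists of size at most $k$ that make $c$ a Condorcet winner in the original one, and the two counts coincide; membership of \#CCAUV in \sharpp is immediate, so we obtain \sharpp-hardness of computing $\Pr[\win c {\T'}]$. The main technical point, and the step I expect to require the most care, is the padding of $\M$: I must set the baseline gate margin to exactly $k+1$ \emph{without perturbing any margin among $C$}. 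This is achieved by first placing $x$ at the bottom of every registered vote (contributing $c \succ x$ while preserving $C$'s internal order) and then adding reversed pairs of votes that cancel on all margins except $(c,x)$, shifting it up or down in steps of two, together with a routine parity adjustment (e.g.\ through an auxiliary dummy candidate) to reach the exact value $k+1$. All of this is polynomial since $k \le |\Q|$.
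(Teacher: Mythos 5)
Your central idea---a gate candidate $x$ that every unregistered voter ranks immediately above $c$, with the registered profile calibrated so that the baseline margin $N(c,x)-N(x,c)$ is exactly $k+1$---is precisely the mechanism of the paper's proof (there the gate is the candidate $d$, installed with baseline margin $q+1$ inside the Wojtas--Faliszewski construction). However, your generic, black-box reduction from \#CCAV has a genuine gap exactly at the step you flagged: the parity adjustment, as described, cannot work. Every vote is a linear order over the full candidate set, so each added registered vote shifts \emph{every} margin $N(c,a)-N(a,c)$ by exactly $\pm 1$. Hence (i) keeping the margins between $c$ and the candidates of $C$ exactly unchanged forces the number of padding votes to be even, and (ii) the baseline gate margin always satisfies $N_{\M}(c,x)-N_{\M}(x,c) \equiv (\text{number of registered votes}) \pmod 2$. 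Since the gate threshold is forced (you need the margin $M_0$ to satisfy $M_0 > k$ and $M_0 \le k+1$, i.e., $M_0=k+1$ exactly, whenever $k < |\Q|$), the construction is feasible if and only if $|\M| \equiv k+1 \pmod 2$. When $|\M| \equiv k \pmod 2$ no padding scheme achieves both requirements, and an auxiliary dummy candidate does not help: adding dummies does not change the fact that every added vote moves each margin between $c$ and a member of $C$ by an odd amount, and an odd (hence nonzero) perturbation of some $N(c,c')-N(c',c)$ can flip whether $c$ beats $c'$ for sub-profiles $\Q'$ whose margin sits at the boundary value $0$ or $1$, so the two counts need not coincide.

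The gap is repairable, and the repair essentially collapses your argument into the paper's: a reduction need only work on a \sharpp-hard \emph{subfamily} of \#CCAV instances, and the hard instances produced by the Wojtas--Faliszewski reduction from \#X3C have $|\M| = q-3$ registered voters and bound $k=q$, so $|\M| \equiv k+1 \pmod 2$ and your gate construction goes through on them verbatim. This is exactly what the paper does: it re-derives those instances from \#X3C, adds the gate candidate $d$ ranked immediately above $c$ by each unregistered voter $Q_2^e = O(e, \set{d}, \set{c}, U \setminus e)$, and pads $\M$ with an \emph{even} number of extra votes (two votes $(c,d,u_1,\dots,u_{3q})$, plus increasing the count of the original votes by two) so that the $(c,u)$ margins are preserved and the gate margin is $q+1$. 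So to complete your proof you should either reduce from \#X3C directly as the paper does, or restrict your reduction to \#CCAV instances satisfying $|\M| \equiv k+1 \pmod 2$ and justify---by opening the Wojtas--Faliszewski proof rather than citing it as a black box---that \#CCAV remains \sharpp-hard on that subfamily.
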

\begin{proof}
  We show a reduction from \#X3C to \#CCAUV under Condorcet. In \#X3C,
  we are given a vertex set $U = \set{u_1, \dots, u_{3q}}$ and a
  collection $E$ of 3-element subsets of $U$,
  and the goal is to count the $q$-element subsets of $E$ that cover
  $U$ using pairwise-disjoint sets. This problem is known to be
  \sharpp-complete~\cite{DBLP:journals/siamcomp/HuntMRS98}.

  Our reduction is an adaptation of the proof of Wojtas and
  Faliszewski~\cite{DBLP:conf/aaai/WojtasF12} that Condorcet-\#CCAV is
  \sharpp-complete. The reduction is as follows.  Let
  $U = \set{u_1, \dots, u_{3q}}$ and $E$ be an
  instance of \#X3C. The candidate set is $C = U \cup \set{c}$ and the
  voting profile is $\T_1 = \M_1 \circ \Q_1$.  The first part, $\M_1$,
  consists of $q-3$ voters with the preferences
  $(u_1, \dots, u_{3q}, c)$. The second part,
  $\Q_1 = \set{Q_1^e}_{e \in E}$, contains a voter for every set $e$
  in $E$, where $Q_1^e = O(e, \set{c}, U \setminus e)$. They showed a
  bijection between two collections: The sub-profiles $\Q_1' \subseteq \Q_1$ such that
    $|\Q'_1| \leq q$ and $c$ is a Condorcet winner of $\M_1 \circ \Q'_1$, and the subsets $E' \subseteq E$ that are exact covers.
  
We change the reduction as follows to show \sharpp-hardness of
\#CCAUV. First, we add another candidate $d$, so now
$C = U \cup \set{c, d}$. Second, the voting profile is
$\T_2 = \M_2 \circ \Q_2$. The first part, $\M_2$, consists of $q-1$
voters with the preferences $(u_1, \dots, u_{3q}, c, d)$, and two
voters with the preferences $(c, d, u_1, \dots, u_{3q})$.
We have the following for all $u \in U$.
\begin{align*}
  N_{\M_2}(u, c) &=  q-1 = N_{\M_1}(u, c) + 2\\
  N_{\M_2}(c, u) &=  2 = N_{\M_1}(c, u) + 2
\end{align*}
For $c,d$ we have $ N_{\M_2}(c, d) = q+1$ and $N_{\M_2}(d, c) = 0$. The second part, $\Q_2 = \set{Q_2^e}_{e \in E}$, contains the voter
$Q_2^e$ for each $e\in E$, where
$Q_2^e = O(e, \set{d}, \set{c}, U \setminus e)$.

To prove the correctness of our reduction, we will show a one-to-one
correspondence between our witnesses and those of Wojtas and
Faliszewski~\cite{DBLP:conf/aaai/WojtasF12}, that is, between the
sub-profiles $\Q_2' \subseteq \Q_2$ such that $c$ is a Condorcet
winner of $\M_2 \circ \Q'_2$, and the sub-profiles
$\Q_1' \subseteq \Q_1$ such that $|\Q'_1| \leq q$ and $c$ is a
Condorcet winner of $\M_1 \circ \Q'_1$.

Let $\Q_2' \subseteq \Q_2$, denote $\T_2' = \M_2 \circ \Q_2'$. Also define the sub-profiles
$\Q_1' = \set{Q_1^e : Q_2^e \in \Q_2'} \subseteq \Q_1$ and
$\T_1' = \M_1 \circ \Q_1'$. Observe that the following holds for all
$u \in U$:
\begin{align*}
    N_{\T_2'}(u, c) &= N_{\M_2}(u,c) + N_{\Q_2'}(u, c) = N_{\M_1}(u, c) + 2 + N_{\Q_1'}(u, c) \\
    &= N_{\T_1'}(u, c) + 2
\end{align*}
Similarly, we have $N_{\T_2'}(c, u) = N_{\T_1'}(c, u) + 2$,
$N_{\T_2'}(c, d) = q+1$ and $N_{\T_2'}(d, c) = |\Q_2'|$.
 Hence, for all $\Q_2' \subseteq \Q_2$, if $|\Q_2'| > q$ then $N_{\T_2'}(c, d) \leq N_{\T_2'}(d, c)$ and $c$ is not a Condorcet winner of $\T_2'$. Otherwise, $|\Q_2'| \leq q$. In this case we have $N_{\T_2'}(c, d) > N_{\T_2'}(d, c)$ and for every $u \in U$, $N_{\T_2'}(u, c) = N_{\T_1'}(u, c) + 2$ and $N_{\T_2'}(c, u) = N_{\T_1'}(c, u) + 2$.

Similarly, given a sub-profile $\Q_1' \subseteq \Q_1$ such that $|\Q_1'| \leq q$ and $c$ is a Condorcet winner of $\M_1 \circ \Q_1'$, we can define a sub-profile $\Q_2' = \set{Q_2^e : Q_1^e \in \Q_1'} \subseteq \Q_2$ and get that $c$ is a Condorcet winner of $\M_2 \circ \Q_2$. 

We conclude the claimed correspondence, and hence, we get a
polynomial-time reduction from \#X3C to  \#CCAUV.
\end{proof}

Next, we show hardness for the Maximin rule.
\begin{theorem}
\label{thm:computeWinMaximin}
Computing $\Pr[\win c {\T'}]$ is \sharpp-hard under Maximin.
\end{theorem}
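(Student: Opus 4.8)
The plan is to follow the template established for the other rules: by the reduction from \#CCAUV described at the start of this section, it suffices to prove that \#CCAUV is \sharpp-complete under Maximin. As with Condorcet in Theorem~\ref{thm:computeWinCondorcet}, the starting point is the result of Wojtas and Faliszewski~\cite{DBLP:conf/aaai/WojtasF12} that the \emph{bounded} problem \#CCAV is \sharpp-complete under Maximin, via a reduction from a \sharpp-complete cover-counting problem (e.g.\ \#X3C) in which the bound $k$ on the number of added voters coincides with the target cover size $q$. The task is therefore to modify their instance so that the cardinality bound $|\Q'| \le k$ is enforced \emph{intrinsically} by the Maximin scores rather than being imposed externally.

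To do this I would introduce a single fresh guard candidate $d$ and augment the registered profile $\M$ so that, inside the intended regime $|\Q'| \le q$, the pairwise tallies and hence the Maximin scores of all original candidates are identical to those in the Wojtas--Faliszewski instance, while $d$ is dominated and never relevant. Each unregistered voter would be made to rank $d$ directly above $c$ (and high overall), exactly as the voters $Q_2^e$ in the Condorcet proof rank $d$ above $c$; this makes $N_{\T'}(d,c)$ grow by one per selected voter, so that $N_{\T'}(d,c) = |\Q'|$, while the registered voters pin $N_{\T'}(c,d)$ to a fixed value. The registered voters would additionally be set up so that $d$ defeats every other candidate by a large margin, forcing $d$'s Maximin score to be governed by its single weak comparison against $c$, namely $N_{\T'}(d,c) = |\Q'|$.

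With this gadget the analysis splits exactly as in Theorem~\ref{thm:computeWinCondorcet}. In the regime $|\Q'| \le q$, candidate $c$'s Maximin score is pinned to the value it had in the original instance, so $c$ is a Maximin winner if and only if the corresponding set family is an exact cover; in the regime $|\Q'| > q$, the growing score $N_{\T'}(d,c) = |\Q'|$ pushes $d$'s Maximin score strictly above $c$'s fixed score, so $c$ can never be a winner. This yields a bijection between the \#CCAUV witnesses $\Q'$ of the new instance and the bounded \#CCAV witnesses of Wojtas and Faliszewski, completing the reduction.

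I expect the delicate step to be the numeric calibration of the guard gadget. Because Maximin takes a minimum over all pairwise comparisons, inserting $d$ and the extra registered voters perturbs the defining comparison of several candidates simultaneously, and one must ensure that (i) $d$ does not accidentally depress some vertex candidate's score into a spurious win or tie, (ii) $c$'s Maximin score remains exactly the original value throughout the good regime rather than being clamped by its comparison with $d$, and (iii) the crossover of $d$'s score past $c$'s occurs precisely at the threshold $|\Q'| = q$. Unlike the Condorcet case---where disqualifying $c$ only required flipping the single comparison $c$ versus $d$---here the disqualification must propagate through the score (minimum-margin) comparison, so the margins must be chosen with enough slack that only the intended comparisons are ever tight. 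Note that, since Maximin is not a positional scoring rule, Lemma~\ref{lemma:fixingScores} is unavailable and these margins must be realized by an explicit family of registered voters, as in the Condorcet construction.
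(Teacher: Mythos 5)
Your core mechanism is the right one, and it is in fact the mechanism the paper uses: enforce the missing cardinality bound by a guard candidate whose Maximin score grows by one per selected unregistered voter, while $c$'s own score is capped by its \emph{fixed} comparison against the guard. But the paper does not modify the Wojtas--Faliszewski \#CCAV instance. It reduces from \#X3C directly, adapting the CCAV construction of Faliszewski, Hemaspaandra and Hemaspaandra, with \emph{two} auxiliary candidates in cleanly separated roles: a benchmark $d$ whose score is pinned at $2q$, which forces coverage because $N_{\T'}(c,u) = 2q-1+\deg_{E'}(u)$, and a guard $w$ that every unregistered voter ranks \emph{first}, so that all of $w$'s pairwise tallies grow uniformly and $s(\T',w) = q+|\Q'|$ exactly, independently of which sets are chosen, while $N_{\T'}(c,w)=2q$ stays fixed.

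The genuine gaps are in the step you defer. First, the crossover arithmetic fails as stated: porting the Condorcet gadget, you put the guard directly above $c$ and claim that $N_{\T'}(d,c)=|\Q'|$ overtakes $c$'s fixed score precisely at $|\Q'|=q$. That requires $c$'s winning Maximin score to equal exactly $q$, which nothing in the Wojtas--Faliszewski instance guarantees; in general you must also install exactly $s^*-q$ registered voters ranking $d$ above $c$, where $s^*$ is $c$'s winning score, and pin $N_{\T'}(c,d)$ to exactly $s^*$. These equalities are forced with \emph{zero} slack, because $c$'s other comparisons can also grow at rate one per added voter, so only a one-vote margin separates the two regimes --- this contradicts your guiding principle of choosing margins ``with enough slack.'' Second, because Maximin is a minimum over \emph{all} comparisons, a guard placed directly above $c$ has $N_{\T'}(d,u)$ growing only at rate $\deg_{E'}(u)$, not $|\Q'|$; in the over-selection regime with some sparsely covered element, $d$'s score is then not governed by $N_{\T'}(d,c)$ at all, and $d$ can fail to disqualify $c$ unless the benchmark takes over --- an argument you never make, and which the paper's top-ranked guard $w$ renders unnecessary. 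Third, ``pairwise tallies identical to the original instance'' is unachievable once you add a candidate and registered voters; tallies can only be preserved up to a uniform shift (e.g., by reversed-pair padding), and that padding feeds back into the exact constants above, as does the padding needed to keep the new comparisons $N_{\T'}(u,d)$ from clamping other candidates' scores and creating spurious witnesses. All of this is repairable --- carried out, it essentially reproduces the paper's two-candidate construction --- but these repairs are the actual content of the proof, and they are missing.
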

\begin{proof}
  We show a reduction from \#X3C, as defined in the proof of
  Theorem~\ref{thm:computeWinCondorcet}, to \#CCAUV under Maximin. The
  reduction is an adaptation of the proof of Faliszewski,
  Hemaspaandra and
  Hemaspaandra~\cite{DBLP:journals/jair/FaliszewskiHH11} that CCAV is
  \np-complete under Maximin. Let $U = \set{u_1, \dots, u_{3q}}$ and
  $E$ be an instance of \#X3C. Define the candidate set
  $C = U \cup \set{c, d, w}$ and voting profile $\T = \M \circ \Q$, as
  detailed next.

  The $\M$ part consists of $4q$ voters as follows: $q$ voters with
  the preference
  $O(c, d, U, w)$, then $q-1$ voters with $O(c, U, w, d)$, then a
  single voter with $O(U, c, w, d)$, and lastly $2q$ voters with
  $O(d, w, U, c)$. The second part is $\Q = \set{Q_e}_{e \in E}$ where 
  $Q_e = O(w, U \setminus e, c, e, d)$.
  
  Let $\Q' \subseteq \Q$, define $\T' = \M \circ \Q'$ and $E' = \set{e \in E : Q_e \in \Q'}$. For $d$ we have that $N_\M(d, c) = 2q$ and, for all $u \in U$, that
  $N_\M(d, u) = 3q$ and $N_\M(d, w) = 3q$. Since the voters of $\Q'$
  rank $d$ at the bottom position, $\Q'$ does not affect
  the score of $d$ and $s(\T', d) = 2q$.

  For $w$ it holds that every voter of $\Q'$ ranks $w$ at the top
  position and, hence, we have that $N_{\T'}(w, c) = 2q + |\Q'|$, that
  $N_{\T'}(w, d) = q + |\Q'|$, and that $N_{\T'}(w, u) = 2q + |\Q'|$
  for all $u \in U$. Therefore $s(\T', w) = q + |\Q'|$. For $u \in U$ we have $N_{\T'}(u, d) = q + |\Q'|$, therefore, $s(\T', u) \leq q + |\Q'|$. 
  
  Finally, for $c$ we have $N_{\T'}(c, d) = 2q + |\Q'|$ and $N_{\T'}(c, w) = 2q$. For every $u \in U$, let $\deg_{E'}(u)$ be the number of sets of $E'$ incident to $u$, we get that $N_{\T'}(c, u) = 2q-1 + \deg_{E'}(u)$. We complete the proof by showing that the
  number of subsets $\Q' \subseteq \Q$ wherein $c$ is a winner of
  $\M \circ \Q'$ is the number of exact covers.

  First, suppose that $E' \subseteq E$ is an exact cover, that is,
  $|E'| = q$ and $\deg_{E'}(u) = 1$ for every $u \in U$. Define
  $\Q' = \set{Q_e : e \in E'}$ and $\T' = \M \circ \Q'$. From the
  above we have:
  \begin{itemize}
  \item $s(\T', d) = s(\T', w) = 2q$;
  \item $s(\T', u) \leq 2q$ for all $u\in U$;
  \item $s(\T', c) = 2q$ since $N_{\T'}(c, u) = 2q$ for all $u\in U$.
\end{itemize}
Hence, $c$ is a winner of $\T'$.

Conversely, let $\Q' \subseteq \Q$ be such that $c$ is a winner of
$\T' = \M \circ \Q'$, and let $E'$ be the corresponding subset of
$E$. If $|\Q'| > q$ then $s(\T', w) > 2q$ and
$s(\T', c) \leq N_{\T'}(c, w) = 2q$; hence, we get a contradiction. If
there exists $u \in U$ such that $\deg_{E'}(u) = 0$, then
$s(\T', c) \leq N_{\T'}(c, u) = 2q - 1$ and $s(\T', d) = 2q$, so we
again get a contradiction. We conclude that $|E'| \leq q$ and
$\deg_{E'}(u) \geq 1$ for all $u \in U$, therefore $E'$ is an exact
cover.
\end{proof}

To summarize the section, we established the row $\Pr[\text{win}]$ of
Table~\ref{tab:complexityOther} on the exact evaluation of the
probability of winning. In the next two sections, we discuss
approximate evaluation.

\section{Hardness of Approximation}\label{sec:approx-hardness}
Observe that there is an additive FPRAS for $\Pr[\win c \T]$ whenever we can test in polynomial time whether $c$ is a winner of a sampled (fully deterministic) profile. Such an FPRAS is obtained by taking the ratio of successes in trials wherein we sample voters (according to their distribution) and test whether $c$ is a winner; then, an FPRAS can be shown in standard ways (e.g., the Hoeffding's Inequality).

In this and the next section, we study the complexity of multiplicative approximation. Note that a multiplicative FPRAS implies an additive FPRAS, but not vice versa: if the probability is exponentially small, then $0$ is already an additive FPRAS, but not a multiplicative one. In probability estimation, it is often important to get a multiplicative approximation since, unlike the additive approximation, it allows for approximating ratios of probabilities (that are needed, e.g., for conditional probabilities) and for comparison between the likelihood of rare events. In the remainder of the paper, we restrict the discussion to multiplicative approximations, unless explicitly stated otherwise.

In this section, we argue that for most of the rules considered in the previous section, (multiplicative) approximation is also intractable and an FPRAS does not exist under conventional complexity assumptions.  We show it by proving that it is NP-hard to determine whether $\Pr[\win c {\T'}] = 0$ and, therefore, a multiplicative FPRAS for $\Pr[\win c \T]$ under $r$ implies that $\np \subseteq \bpp$.

As a general technique, observe that deciding whether $\Pr[\win c
{\T'}] > 0$ does not depend on the probabilities of voters (as long as this probability is nonzero). Hence, there is a polynomial-time reduction from the decision of $\Pr[\win c {\T'}] > 0$ to CCAUV and vice versa. In the first direction, each random voter with probability zero is ignored, each voter with probability one is put in $\M$, and each remaining voter is put in $\Q$. In the second direction, every voter of $\M$ participates with probability 1 and every voter of $\Q$ participates with probability 1/2 (or any other probability in $(0,1)$).

\subsection{Hardness Results}
We begin by showing hardness for the  Borda rule. 
\begin{theorem}
\label{thm:winZeroBorda}
For the Borda rule, CCAUV is $\np$-complete; hence, under Borda it is
$\np$-complete to decide whether $\Pr[\win c {\T'}] > 0$.
\end{theorem}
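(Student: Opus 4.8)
The plan is to establish both halves of the claim for CCAUV and then transfer to the probability question. Membership in NP is immediate: a witness is a sublist $\Q' \subseteq \Q$, which has size at most $|\Q|$ and is therefore polynomially bounded, and checking whether $c$ is a Borda winner of $\M \circ \Q'$ takes polynomial time (compute all scores and compare). The same witness places the decision problem $\Pr[\win c {\T'}] > 0$ in NP. Hence the substance of the theorem is the NP-hardness of Borda-CCAUV, which I would obtain by a reduction from the decision version of \#X3C, namely Exact Cover by 3-Sets: given $U = \set{u_1, \dots, u_{3q}}$ and a family $E$ of $3$-element subsets, decide whether some subfamily of exactly $q$ sets partitions $U$. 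This problem is NP-complete.

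The crucial difference from the bounded control problem CCAV is that CCAUV imposes no cardinality bound on $\Q'$, so the construction itself must internally force that adding the intended number of voters is mandatory. I would set $C = U \cup \set{c} \cup \set{d_1, d_2} \cup F$, with a filler set $F$ and a registered profile $\M$ obtained from Lemma~\ref{lemma:fixingScores} to install precise baseline scores, and, for each $e = \set{u_i,u_j,u_k} \in E$, one unregistered voter $Q_e$ that ranks the three elements of $e$ high (giving them a large Borda increment), ranks $c, d_1, d_2$ at carefully chosen positions, and pads the gaps with the fillers. The gadget would be designed so that (i) the auxiliary candidates force the count: $c$ beats $d_1$ only when $|\Q'| \geq q$ and beats $d_2$ only when $|\Q'| \leq q$, so that $c$ can defeat both auxiliaries precisely when $|\Q'| = q$; this is arranged by giving $c$ and $d_2$ distinct per-voter score increments and tuning the baselines from the lemma so the two thresholds coincide at $q$; and (ii) over-covering penalizes $c$: exactly as in the proof of Theorem~\ref{thm:computeWinBorda}, whenever some $u$ has $\deg_{E'}(u) \geq 2$ its Borda score overtakes that of $c$, while $\deg_{E'}(u) \leq 1$ keeps it below.

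Given (i) and (ii), correctness follows by a short counting argument. If $c$ is a winner then $|\Q'| = q$ and every degree is at most $1$; since the $q$ chosen triples contribute exactly $3q = |U|$ incidences spread over degrees at most $1$, each element must have degree exactly $1$, so $E' = \set{e : Q_e \in \Q'}$ is an exact cover. Conversely, an exact cover yields $|\Q'| = q$ with all degrees equal to $1$, making $c$ a winner. The reduction is polynomial by the size bound in Lemma~\ref{lemma:fixingScores}. Finally, the NP-completeness of CCAUV transfers to the decision problem $\Pr[\win c {\T'}] > 0$ through the polynomial-time equivalence between the two problems noted at the start of this section (each $\M$-voter participates with probability $1$ and each $\Q$-voter with probability $1/2$).

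The main obstacle is the joint engineering behind (i) and (ii): a single Borda ballot realizes one permutation, so the increments $Q_e$ contributes to $c$, to the covered elements, and to $d_1, d_2$ are not independent but coupled through their relative positions, which must sum to the fixed total of a permutation. I expect the real work to lie in choosing the positions inside each $Q_e$ together with the size of $F$ so that the element increment dominates the count increments strongly enough that the over-covering threshold and the two count thresholds do not interfere, and then in fixing the baselines via Lemma~\ref{lemma:fixingScores} so that all three inequalities flip at exactly the intended values.
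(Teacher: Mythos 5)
Your proposal is sound, but it takes a genuinely different route from the paper. The paper does not build a covering gadget at all: it reduces from Borda\emph{-CCAV}, which is already known to be \np-complete~\cite{mastersthesis/russell07}, and it exploits a structural property of Russell's hard instances, namely that every unregistered voter ranks the preferred candidate $c^*$ first. It then adds only two candidates and prepends $(d_1,d_2)$ to every unregistered ballot, so that each added ballot gives $m+1$ to $d_1$, $m$ to $d_2$, and $m-1$ to $c^*$; Lemma~\ref{lemma:fixingScores} is used to set the baseline of $d_2$ so that $s(\T_2',d_2)=s(\T_2',c^*)-k+|\Q_2'|$, whence $c^*$ survives against $d_2$ if and only if $|\Q_2'|\leq k$. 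In other words, the paper engineers only the ``upper-bound'' half of your counter gadget and inherits all combinatorial hardness from CCAV, so its verification reduces to three score computations. Your route is self-contained: a direct reduction from X3C that must simultaneously force the count and punish over-covering. The interference you flag at the end is real but can be dispatched exactly as in the proof of Theorem~\ref{thm:computeWinBorda}: place $d_2\succ c\succ d_1$ in the bottom three positions of every $Q_e$ (per-ballot increments $2,1,0$), place the three elements of $e$ on top with the fillers $F$ in between, and take $|F|$ polynomially huge (say $q^{10}$), so that a covered element gains roughly $|C|$ per covering ballot while all count-related increments are $O(1)$ and uncovered elements gain only $O(q)$; then use Lemma~\ref{lemma:fixingScores} to give $c$ and the elements of $U$ baselines large enough (on the order of $|E|\cdot|C|$) to dominate the total score that the $F$-candidates can ever accumulate, which keeps the reduction polynomial. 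One simplification is available: once degrees at most one are enforced, your lower-bound dummy $d_1$ alone suffices, since each chosen triple contributes three incidences and so $3|\Q'|\leq 3q$ already forces $|\Q'|\leq q$; the $d_2$ threshold is redundant. In short, your approach works and is independent of the specifics of Russell's reduction, at the price of the gadget engineering you deferred; the paper's approach buys brevity by reusing the known CCAV hardness and confining the new work to the cardinality counter.
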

\begin{proof}
  We show a reduction to Borda-CCAUV from Borda-CCAV, which is known
  to be \np-complete~\cite{mastersthesis/russell07}. Let
  $\M_1, \Q_1, c^*$ and $k$ be an input for CCAV under Borda over a set $C_1$ of $m$ candidates, where
  $\Q_1 = (Q_1^1, \dots, Q_1^n)$. By the proof of
  Russell~\cite{mastersthesis/russell07} that CCAV is $\np$-complete
  for Borda, we can assume that all voters of $\Q_1$ rank $c^*$ at the
  top position. We construct an instance of CCAUV under Borda, where
  the candidate set is  $C_2 = C_1 \cup \set{d_1, d_2}$.

Let $\M_2$ be the profile of Lemma~\ref{lemma:fixingScores} such that (for some $\lambda > 0$):
\begin{itemize}
    \item $s(\M_2, d_1) < \lambda$ and $s(\M_2, d_2) = \lambda + 2mn + s(\M_1, c^*) - k$.
    \item For every $c \in C_1$ we have $s(\M_2, c) = \lambda + 2mn + s(\M_1, c)$.
\end{itemize}
The second profile $\Q_2 = (Q_2^1, \dots, Q_2^n)$ consists of $n$
voters, where $Q_2^i = (d_1, d_2) \circ Q_1^i$ for every
$i \in [n]$. 

Let $\Q_2'$ be a sublist of $\Q_2$, define $\Q_1' = \set{Q_1^i : Q_2^i \in \Q_2'}$, and let $\T_2' = \M_2 \circ \Q_2'$, $\T_1' = \M_1 \circ \Q_1'$. For $c \in C_1$ we have the following.
\begin{align}
    s(\T_2', c) &= s(\M_2, c) + s(\Q_2', c) = \lambda + 2mn + s(\M_1, c) + s(\Q_1', c) \notag \\
    &= \lambda + 2mn + s(\T_1', c) \label{eq:c}
\end{align}
Since we assume that all voters of $\Q_1$ rank $c^*$ at
the top position, for all $i \in [n]$ we have $s(Q_2^i, c^*) = s(Q_1^i, c^*) = m-1$, and overall
\begin{align*}
    s(\T_2', c^*) &= \lambda + 2mn + s(\M_1, c^*) + (m-1) |\Q_2'| 
\end{align*}
Finally, every voter of $\Q_2'$ contributes the scores $m+1$ and $m$ to $d_1$ and $d_2$, respectively, therefore
\begin{align}
    s(\T_2', d_1) &< \lambda + (m+1)|\Q_2'| \leq \lambda + (m+1)n \label{eq:d1} \\
    s(\T_2', d_2) &= \lambda + 2mn + s(\M_1, c^*) - k + m|\Q_2'| = s(\T_2', c^*) - k + |\Q_2'| \label{eq:d2}
\end{align}
From Equation~\eqref{eq:c} we can deduce that for every $c \in C_1 \setminus \set{c^*}$, $c^*$ defeats $c$ in $\T_1$ if and only if $c^*$ defeats $c$ in $\T_2$. From Equations~\eqref{eq:c} and~\eqref{eq:d1} we can deduce that $d_1$ is defeated by all candidates of $C_1$. From Equation~\eqref{eq:d2} we can deduce that $c^*$ defeats $d_2$ in $\T_2'$ if and only if $|\Q_2'| \leq k$. 

We show that there exists  $\Q_2' \subseteq \Q_2$ such that $c^*$ is a winner of $\M_2 \circ \Q_2'$ if and only if there exists $\Q_1' \subseteq \Q_1$ of size at most $k$ such that $c^*$ is a winner of $\M_1 \circ \Q_1'$. Let $\Q_2' \subseteq \Q_2$ such that $c^*$ is a winner of $\M_2 \circ \Q_2'$. In particular, $c^*$ defeats $d_2$, hence as we said $|\Q_2'| \leq k$. Let $\Q_1' = \set{Q_1^i : Q_2^i \in \Q_2'}$, we have $|\Q_1| \leq k$. For every  $c \in C_1 \setminus \set{c^*}$, $c^*$ defeats $c$ in $\M_2 \circ \Q_2'$, hence $c^*$ defeats $c$ in $\M_1 \circ \Q_1'$. Therefore $|\Q_1| \leq k$ and $c^*$ is a winner of $\M_1 \circ \Q_1'$.

Conversely, let $\Q_1' \subseteq \Q_1$ of size at most $k$ such that
$c^*$ is a winner of $\M_1 \circ \Q_1'$, define $\Q_2' = \set{Q_2^i : Q_1^i \in \Q_1'}$. By the same arguments as before, $c^*$ defeats every candidates of $C_1 \setminus \set{c^*}$ in $\M_2 \circ \Q_2'$, all candidates of $C_1$ defeat $d_1$, and $c^*$ defeats $d_2$ since $|\Q_2'| \leq k$. Hence, $c$ is a winner of $\M_2 \circ \Q_2'$. 
\end{proof}

The following two theorems show that CCAUV is $\np$-complete under
$R(f, \ell)$ whenever $(f, \ell) \neq (1,1)$.

\begin{theorem}
\label{thm:winZeroThreeValuesTop}
For every fixed $f \geq 2$ and $\ell \geq 1$, CCAUV is $\np$-complete under $R(f, \ell)$.
\end{theorem}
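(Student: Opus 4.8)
The statement has two parts. Membership in \np is immediate: a witness is the chosen sublist $\Q' \subseteq \Q$, which has polynomial size because $\Q$ is given explicitly, and testing whether $c$ wins $\M \circ \Q'$ under $R(f,\ell)$ is polynomial. So the work is the hardness direction, and the plan is to reduce from \e{Exact Cover by 3-Sets} (X3C): given $U=\set{u_1,\dots,u_{3q}}$ and a family $E$ of $3$-element subsets, decide whether $q$ of the sets partition $U$. This mirrors the X3C reductions used for Condorcet and Maximin (Theorems~\ref{thm:computeWinCondorcet} and~\ref{thm:computeWinMaximin}), and the aim is a CCAUV instance in which $c$ can be made a winner exactly when an exact cover exists.

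The candidate set is $U \cup \set{c}$ together with a \e{counter} candidate and a pool $F$ of fillers that only occupy positions, and there is one unregistered voter $Q_e$ for each $e \in E$. The registered profile $\M$ is supplied by Lemma~\ref{lemma:fixingScores} to pin the base scores. Writing $p=|\Q'|$ and letting $t_u$ count the chosen sets containing $u$, I would arrange positions so that each added voter changes $s(\T',c)$ at a fixed rate and moves the counter relative to $c$ by a constant per voter, so that by choosing $s(\M,\cdot)$ from the lemma the event ``$c$ beats the counter'' becomes exactly a bound on $p$ (say $p\le q$); simultaneously each $u$ sits one score-band below $c$ on the voters $Q_e$ with $u\in e$ and level with $c$ otherwise, with its base score tuned so that ``$c$ beats $u$'' becomes exactly $t_u\ge 1$. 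Then $c$ is a winner of $\M\circ\Q'$ iff the chosen sets cover $U$ with at most $q$ sets, which --- since $3q$ elements need at least $q$ triples --- holds iff they form an exact cover. Checking both directions and confirming the fillers never threaten $c$ is then routine.

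The step I expect to be the real obstacle is realizing these score bands under $R(f,\ell)$, since a voter has only $f$ score-$2$ slots and $\ell$ score-$0$ slots, yet signalling a set $e$ asks to move its three elements into one scarce band at once. This is clean when $f\ge 3$ (place the three elements of $e$ in the top band, enforcing a packing condition $t_u\le 1$ together with the lower bound $p\ge q$) and symmetrically when $\ell\ge 3$ (the covering version above). The genuinely hard corners are $R(2,1)$ and $R(2,2)$, where neither band holds three elements: there I would either introduce auxiliary per-set candidates that spread the signal of $e$ across the two top slots and the bottom slot and bundle them so that a set's sub-voters must be taken together, or reduce instead from a pair-based NP-complete problem, which is natural since an $R(2,\cdot)$ voter boosts exactly two candidates. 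The remaining bookkeeping --- that Lemma~\ref{lemma:fixingScores} can realize the required additive gaps --- is straightforward.
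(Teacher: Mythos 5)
Your reduction from X3C only works when one score band can absorb an entire triple, i.e.\ when $f \geq 3$ (top band) or $\ell \geq 3$ (bottom band), and you correctly identify $R(2,1)$ and $R(2,2)$ as the obstacle --- but these cases are squarely inside the theorem's scope, and neither of your fallbacks closes the gap. The bundling idea cannot work as described: in CCAUV any sublist of $\Q$ may be selected, so there is no mechanism to force a set's ``sub-voters'' to be taken together; soundness would require proving that no partial bundle ever makes $c$ a winner, which is precisely the hard part and is not sketched. The ``pair-based NP-complete problem'' fallback is also shaky: the natural pair-based packing/covering problems (graph matching, edge cover) are decidable in polynomial time --- this is exactly why the paper extracts only \sharpp-hardness of \emph{counting} from them for $k$-approval and $k$-veto (Theorems~\ref{thm:computeWinApproval} and~\ref{thm:computeWinVeto}), and why zeroness is in \p for all binary rules (Theorem~\ref{thm:WinZeroBinaryScores}). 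So as it stands the proposal has a genuine hole at $R(2,1)$ and $R(2,2)$.

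The paper's proof avoids any case split by reducing from \emph{3-dimensional matching} and spreading each triple $e=(x,y,z)$ \emph{across} the two scarce bands: $x,y$ plus $f-2$ fillers occupy the top band and $z$ plus $\ell-1$ fillers occupy the bottom band, so only $f\geq 2$ and $\ell\geq 1$ are ever needed. The base scores from Lemma~\ref{lemma:fixingScores} are tuned asymmetrically: elements of $X\cup Y$ start one point below $c$, so ``$c$ weakly beats $u$'' becomes $\deg_{E'}(u)\leq 1$, while elements of $Z$ start one point above $c$, so ``$c$ weakly beats $z$'' becomes $\deg_{E'}(z)\geq 1$. Covering $Z$ then forces $|E'|\geq q$ and packing $X\cup Y$ forces $|E'|\leq q$, so no separate counter candidate is needed at all. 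Note that the tripartite structure of 3DM is essential here and is exactly what X3C lacks: with arbitrary $3$-element sets, an element could be assigned to the top band by one voter and the bottom band by another, making a consistent tuning of its base score impossible; in 3DM each element lives in the same band in every voter that mentions it. If you replace X3C by 3DM and adopt this $2{+}1$ split, your construction becomes uniform over all $f\geq 2$, $\ell\geq 1$ and your separate $f\geq 3$ and $\ell\geq 3$ gadgets are unnecessary.
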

\begin{proof}
We show a reduction from the problem of \e{3-dimensional matching}
(3DM): Given three disjoint sets $X =
\set{x_1, \dots, x_q}$, $Y = \set{y_1, \dots, y_q}$ and $Z = \set{z_1,
  \dots, z_q}$ of the same size, and a set $E \subseteq X \times Y
\times Z$, is there a subset $E' \subseteq E$ consisting of $q$ pairwise-disjoint
triples? This problem is know to be \np-complete~\cite{DBLP:books/fm/GareyJ79}.
Given $X$, $Y$, $Z$ and $E = \set{e_1, \dots, e_m}$, we construct an
instance of CCAUV under $R(f, \ell)$. Denote $U = X \cup Y \cup
Z$. The candidate set is $C = U \cup W_1 \cup W_2 \cup \set{c, d}$
where $W_1 = \set{w_{1,1}, \dots, w_{1, f-2}}$ and
$W_2 = \set{w_{2,1}, \dots, w_{2, \ell-1}}$. Let $\M$ the profile
of Lemma~\ref{lemma:fixingScores} such that (for some
$\lambda > 0$):
\begin{itemize}
    \item $s(\M, c) = \lambda + 2m$ and $s(\M, d) < \lambda$;
    \item $s(\M, w) = \lambda$  for all $w \in W_1 \cup W_2$;
    \item $s(\M, x) = s(\M, y) = \lambda + 2m - 1$ for all $x \in X$
      and $y \in Y$;
    \item $s(\M, z) = \lambda + 2m + 1$ for all $z \in Z$.
\end{itemize}

The second profile $\Q = \set{Q_e}_{e \in E}$ consists of a voter for every triplet in $E$. For every $e = (x, y, z) \in E$ define
\[ Q_e = O(\set{x, y} \cup W_1, (U \setminus e) \cup \set{c, d}, \set{z} \cup W_2) \,. \]
Note that the candidates of $\set{x, y} \cup W_1$ receive a score of
$2$ from $Q_e$, the candidates of $\set{z} \cup W_2$ receive a score
of 0 from $Q_e$, and the remaining candidates receive 1.

We state some observations regarding the profile. Let
$\Q' \subseteq \Q$, define $E' = \set{e \in E : Q_e \in \Q'}$. Every voter of $\Q'$ contributes a score of 1 to $c$ and $d$,
hence their scores are:
\begin{align*}
 s(\M \circ \Q', c)& = \lambda + 2m + |\Q'| \\
s(\M \circ \Q', d) &< \lambda + |\Q'| \leq \lambda + m < s(\M \circ \Q', c)
\end{align*}
Similarly, for every $w \in W_1 \cup W_2$ the score satisfies $s(\M \circ \Q', w) \leq \lambda + 2m$, hence $c$ always defeats the candidates of $W_1 \cup W_2 \cup \set{d}$. 

For every $u \in U$, let $\deg_{E'}(u)$ be the number of triplet in $E'$ that are incident to $u$. For every $u \in X \cup Y$ we have
\begin{align}
    s(\M \circ \Q', u) &= \lambda + 2m-1 + |\Q'| + \deg_{E'}(u) = s(\M \circ \Q', c) - 1 + \deg_{E'}(u) \,.\label{eq:xy}
\end{align}
and for every $z \in Z$ we have
\begin{align}
    s(\M \circ \Q', z) &= \lambda + 2m+1 + |\Q'| - \deg_{E'}(z) = s(\M \circ \Q', c) + 1 - \deg_{E'}(z) \,.\label{eq:z}
\end{align}

We show that there exists $\Q' \subseteq \Q$ such that $c$ is a winner
of $\M \circ \Q'$ if and only if there is a 3DM. Let $E' \subseteq E$
be a 3DM, that is, $|E'| = q$ and $\deg_{E'}(u) = 1$ for all
$u \in U$. Define $\Q' = \set{Q_e : e \in E'}$. For every $u \in X \cup Y$, by Equation~\eqref{eq:xy}, we have $s(\M \circ \Q', u) = s(\M \circ \Q', c)$, and for every $z \in Z$, by Equation~\eqref{eq:z} we have $s(\M \circ \Q', z) = s(\M \circ \Q', c)$. Since  $c$ always defeats the candidates of $W_1 \cup W_2 \cup \set{d}$, we can deduce that $c$ is a winner of $\M \circ \Q'$.

Conversely, let $\Q' \subseteq \Q$ such that  $c$ is a winner of $\M \circ \Q'$, and let $E' = \set{e \in E : Q_e \in \Q'}$. For every $z \in Z$, since $c$ defeats every $z$, by Equation~\eqref{eq:z} we have $\deg_{E'}(z) \geq 1$. Every $z \in Z$ is covered by at least one set of $E'$, therefore $|E'| \geq q$. For every $u \in X \cup Y$, since $c$ defeats every $u$, by Equation~\eqref{eq:xy} we have $\deg_{E'}(u) \leq 1$. If $|E'| > q$ then there exists  $u \in X \cup Y$ for which  $\deg_{E'}(u) > 1$, hence a contradiction.

Overall, $|E'| = q$, each $z \in Z$ is covered by at least one set of
$E'$, and each $u \in X \cup Y$ is covered by at most one set of
$E'$. Therefore, $E'$ is a 3DM.
\end{proof}

By a reduction similar to the proof of Theorem~\ref{thm:winZeroThreeValuesTop}, we obtain the following.

\def\thmwinZeroThreeValuesBottom{ For every fixed $f \geq 1$ and $\ell \geq 2$, CCAUV is \np-complete under $R(f, \ell)$. }
\begin{theorem}
\label{thm:winZeroThreeValuesBottom}
\thmwinZeroThreeValuesBottom
\end{theorem}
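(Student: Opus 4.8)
The plan is to mirror the construction in the proof of Theorem~\ref{thm:winZeroThreeValuesTop}, now exploiting the $\ell \geq 2$ bottom (zero-score) positions instead of the top ones. I again reduce from 3DM with an instance $X, Y, Z$ (each of size $q$) and $E \subseteq X \times Y \times Z$, and set $U = X \cup Y \cup Z$. The candidate set is $C = U \cup W_1 \cup W_2 \cup \set{c, d}$, where this time $W_1 = \set{w_{1,1}, \dots, w_{1,f-1}}$ has $f-1$ fillers and $W_2 = \set{w_{2,1}, \dots, w_{2,\ell-2}}$ has $\ell-2$ fillers. This is exactly where the hypotheses $f \geq 1$ and $\ell \geq 2$ enter, guaranteeing $|W_1|, |W_2| \geq 0$.

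For each triple $e = (x,y,z) \in E$ I define the voter
\[ Q_e = O(\set{z} \cup W_1,\ (U \setminus e) \cup \set{c, d},\ \set{x, y} \cup W_2), \]
so that the $f$ candidates $\set{z} \cup W_1$ receive score $2$, the $\ell$ candidates $\set{x, y} \cup W_2$ receive score $0$, and everyone else---including $c$ and $d$---receives $1$. Thus selecting $Q_e$ raises $z$ by an extra $+1$ over its baseline and lowers each of $x, y$ by $1$, reversing the roles played in Theorem~\ref{thm:winZeroThreeValuesTop}. The base profile $\M$ is obtained from Lemma~\ref{lemma:fixingScores} (applicable since the $R(f,\ell)$ vector has gcd $1$), chosen so that for some $\lambda > 0$ we have $s(\M, c) = \lambda + 2m$, $s(\M, d) < \lambda$, $s(\M, w) = \lambda$ for $w \in W_1 \cup W_2$, and---\emph{with the roles of the bases dualized}---$s(\M, z) = \lambda + 2m - 1$ for $z \in Z$ while $s(\M, u) = \lambda + 2m + 1$ for $u \in X \cup Y$.

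Writing $\Q' \subseteq \Q$ and $E' = \set{e : Q_e \in \Q'}$, the key scores become $s(\M \circ \Q', c) = \lambda + 2m + |\Q'|$ and, in analogy with Equations~\eqref{eq:xy} and~\eqref{eq:z},
\begin{align*}
  s(\M \circ \Q', z) &= s(\M \circ \Q', c) - 1 + \deg_{E'}(z), \\
  s(\M \circ \Q', u) &= s(\M \circ \Q', c) + 1 - \deg_{E'}(u) \qquad (u \in X \cup Y).
\end{align*}
A routine check using $|\Q'| \leq |E| = m$ shows that $c$ always defeats $d$ and every filler of $W_1 \cup W_2$; the one point needing attention is that the $W_1$ fillers sit at the \emph{top} (scoring $2$ per voter), but their score $\lambda + 2|\Q'|$ stays below $s(\M \circ \Q', c) = \lambda + 2m + |\Q'|$ precisely because $|\Q'| \leq m$. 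It then follows that $c$ is a winner of $\M \circ \Q'$ iff $\deg_{E'}(z) \leq 1$ for all $z \in Z$ and $\deg_{E'}(u) \geq 1$ for all $u \in X \cup Y$.

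The remaining step is the two-sided counting argument, which I expect to be the main obstacle to state cleanly. Since each triple meets $Z$ in exactly one element, $\deg_{E'}(z) \leq 1$ for all $z$ gives $|E'| = \sum_{z \in Z} \deg_{E'}(z) \leq q$; dually, covering all of $X$ forces $|E'| = \sum_{x \in X} \deg_{E'}(x) \geq q$. Hence $|E'| = q$ and all degrees are forced to equal $1$ across $X$, $Y$, and $Z$, i.e.\ $E'$ is a perfect 3DM; conversely any 3DM yields such a $\Q'$. Membership in \np is immediate, as one can guess $\Q'$ and verify the winner condition in polynomial time. This establishes that CCAUV is \np-complete under $R(f,\ell)$ for every $f \geq 1$ and $\ell \geq 2$, completing the dual case and, together with Theorem~\ref{thm:winZeroThreeValuesTop}, covering all $(f,\ell) \neq (1,1)$.
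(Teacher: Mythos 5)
Your proof is correct and is essentially the paper's own argument: the same reduction from 3DM, the same candidate set with $|W_1|=f-1$ and $|W_2|=\ell-2$, the same use of Lemma~\ref{lemma:fixingScores} to set baseline scores offset by $\pm 1$, and the same degree-counting conclusion. The only difference is cosmetic---the paper gives the score-$2$ role to $x\in X$ and the score-$0$ role to $\set{y,z}$, whereas you give score $2$ to $z\in Z$ and score $0$ to $\set{x,y}$---which is equivalent by the symmetry of 3DM in the three coordinate sets.
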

\begin{proof}
  We again show a reduction from 3DM as defined in the proof of
  Theorem~\ref{thm:winZeroThreeValuesTop}. Given the input
  $X = \set{x_1, \dots, x_q}, Y = \set{y_1, \dots, y_q}, Z = \set{z_1,
    \dots, z_q}$ and $E = \set{e_1, \dots, e_m}$, we construct an
  instance of CCAUV under $R(f, \ell)$. Denote $U = X \cup Y \cup
  Z$. The candidate set is $C = U \cup W_1 \cup W_2 \cup \set{c, d}$
  where $W_1 = \set{w_{1,1}, \dots, w_{1, f-1}}$ and
  $W_2 = \set{w_{2,1}, \dots, w_{2, \ell-2}}$. 
  
The voting profiles $\M, \Q$ are constructed in the same way as in the proof of Theorem~\ref{thm:winZeroThreeValuesTop}, except that for every $e = \set{x, y, z} \in E$, the voter $Q_e$ contributes the score 2 to $x$ and 0 to $y,z$ (instead of contributing 2 to $x,y$ and 0 to $z$), and the scores of $\M$ are modified accordingly.
  
Formally, let $\M$ the profile of
  Lemma~\ref{lemma:fixingScores} such that (for some $\lambda > 0$):
\begin{itemize}
    \item $s(\M, c) = \lambda + 2m$ and $s(\M, d) < \lambda$;
    \item $s(\M, w) = \lambda$ for all $w \in W_1 \cup W_2$.
    \item $s(\M, x) = \lambda + 2m - 1$ for all $x \in X$, and
$s(\M, y) = s(\M, z) =   \lambda + 2m + 1$ for all $y\in Y$ and $z\in Z$.
\end{itemize}
The second profile is $\Q = \set{Q_e}_{e \in E}$. For every $e = \set{x, y, z} \in E$ define
\[ Q_e = O(\set{x} \cup W_1, (U \setminus e) \cup \set{c, d}, \set{y, z} \cup W_2)\,. \]

By the same arguments as in the proof of Theorem~\ref{thm:winZeroThreeValuesTop}, there exists $\Q' \subseteq \Q$ such that $c$ is a winner of $\M \circ \Q'$ if and only if there is a 3DM.
\end{proof}

From Theorems~\ref{thm:winZeroThreeValuesTop}
and~\ref{thm:winZeroThreeValuesBottom}
we conclude the following.
\begin{corollary}\label{cor:winZeroThreeValues}
  For all fixed $(f, \ell) \neq (1,1)$, it is
  $\np$-complete to determine whether $\Pr[\win c {\T'}] > 0$ under
  $R(f, \ell)$.
  \end{corollary}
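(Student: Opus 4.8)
The plan is to obtain the corollary as an immediate consequence of Theorems~\ref{thm:winZeroThreeValuesTop} and~\ref{thm:winZeroThreeValuesBottom}, combined with the general equivalence---stated at the beginning of Section~\ref{sec:approx-hardness}---between CCAUV and the positivity problem $\Pr[\win c {\T'}] > 0$. First I would observe that the hypothesis $(f,\ell) \neq (1,1)$ decomposes into two cases that are jointly exhaustive: either $f \geq 2$ (with $\ell \geq 1$), or $\ell \geq 2$ (with $f \geq 1$); these overlap when both hold, which causes no difficulty. The first case is covered by Theorem~\ref{thm:winZeroThreeValuesTop} and the second by Theorem~\ref{thm:winZeroThreeValuesBottom}, so in either case CCAUV under $R(f,\ell)$ is $\np$-complete.

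Next I would transfer this hardness to the decision of $\Pr[\win c {\T'}] > 0$. Recall that whether $\Pr[\win c {\T'}] > 0$ holds depends only on which voters have probability in $(0,1)$ versus those fixed to $0$ or $1$, not on the precise probability values; this yields polynomial-time reductions in both directions between CCAUV and the positivity problem (placing the probability-one voters in $\M$, the remaining voters in $\Q$ with probability $1/2$, and ignoring probability-zero voters). Applying this equivalence to the $\np$-completeness of CCAUV gives $\np$-hardness of deciding $\Pr[\win c {\T'}] > 0$ under $R(f,\ell)$. For membership in $\np$, a witness is an index set $U \subseteq [n]$ whose indicator has positive probability (that is, $p_i > 0$ for $i \in U$ and $p_i < 1$ for $i \notin U$) and under which $c$ is a winner; both conditions are verifiable in polynomial time, so the problem lies in $\np$.

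There is no real obstacle here: the corollary is essentially a bookkeeping step on top of the two theorems. The only point that warrants a moment of care is confirming that the ranges of the two theorems ($f \geq 2,\ \ell \geq 1$ and $f \geq 1,\ \ell \geq 2$) together cover every pair with $(f,\ell) \neq (1,1)$, which indeed they do.
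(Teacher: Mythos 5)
Your proposal is correct and follows the paper's own route exactly: the paper also derives the corollary immediately from Theorems~\ref{thm:winZeroThreeValuesTop} and~\ref{thm:winZeroThreeValuesBottom} via the general polynomial-time equivalence between CCAUV and deciding $\Pr[\win c {\T'}] > 0$ stated at the start of Section~\ref{sec:approx-hardness}. Your added details (the exhaustive case split on $f \geq 2$ versus $\ell \geq 2$, and the explicit $\np$-membership witness) are correct elaborations of what the paper leaves implicit.
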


\eat{
\begin{theorem}
\label{thm:winZeroThreeValuesBottom}
For every fixed $f \geq 1$ and $\ell \geq 3$, CCAUV is \np-complete under $R(f, \ell)$.
\end{theorem}
\begin{proof}
We show a reduction from X3C. Given $U$ and $E$, we construct an instance under $R(f, \ell)$. The candidates are $C = E \cup W_1 \cup W_2 \cup \set{c^*, d, d'}$ where $W_1 = \set{w_{1,1}, \dots, w_{1, f-1, 1}}$ and $W_2 = \set{w_{2,1}, \dots, w_{2, \ell-e}}$. Let $\M$ the profile which exists by Lemma~\ref{lemma:fixingScores} such that (for some $\lambda > 0$):
\begin{itemize}
    \item $s(\M, c^*) = \lambda + 2m + q$, $s(\M, d) = \lambda + 2m$, $s(\M, d') < \lambda$.
    \item For every $w \in W_1 \cup W_2$, $s(\M, w) = \lambda$.
    \item For every $e \in E$, $s(\M, e) = \lambda + 2m + q + 1$.
\end{itemize}

The second profile $\Q = \set{Q_e}_{e \in E}$ consists of a voter for every edge in $E$. For every $e \in E$ define
\begin{align*}
    \Q_e = O(\set{d} \cup W_1, (E \setminus e) \cup \set{c^*, d'}, e \cup W_2)
\end{align*}
Note that the candidates of $\set{d} \cup W_1$ receive a score of $2$ from $Q_e$, the candidates of $e \cup W_2$ receive a score of 0 from $Q_e$, and the other candidate receive 1. 

We state some observations regrading the profile. Let $\Q' \subseteq \Q$, and let $E' = \set{e \in E : \Q_e \in \Q'}$. For every $u \in U$, let $\deg_{E'}(u)$ be the degree of $u$ in the sub-graph induced by $E'$. The scores of $c^*, d, d'$ are
\begin{align*}
    s(\M \circ \Q', c^*) &= \lambda + 2m + q + |\Q'| \\
    s(\M \circ \Q', d) &= \lambda + 2m + 2|\Q'| \\
    s(\M \circ \Q', d') &= s(\M, d') + |\Q'| < \lambda + |\Q'|\\
\end{align*}
For every $w \in W_1 \cup W_2$ the score satisfies $s(\M \circ \Q', w) \leq \lambda + 2m$, hence $c^*$ always defeats the candidates of $W_1 \cup W_2 \cup \set{d'}$. Finally, for every $u \in U$ we have
\begin{align*}
    s(\M \circ \Q', u) &= \lambda + 2m + q + 1 + |\Q'| - \deg_{E'}(u)
\end{align*}

We show that there exists $\Q' \subseteq \Q$ such that $c^*$ is a winner of $\M \circ \Q'$ if and only if there is an exact cover. Let $E' \subseteq E$ be an exact cover, in particular $|E'| = q$ and $\deg_{E'}(u) = 1$ for every $u \in U$. Define $\Q' = \set{Q_e : e \in E'}$, the score of $c^*$ is $s(\M \circ \Q', c^*) = \lambda + 2m + 2q$ and for every $u \in U$,
\begin{align*}
    s(\M \circ \Q', u) &= \lambda + 2m + q + 1 + q - 1 = s(\M \circ \Q', c^*)
\end{align*}
Therefore $c^*$ is a winner of $\M \circ \Q'$.

Conversely, let $\Q' \subseteq \Q$ such that  $c^*$ is a winner of $\M \circ \Q'$, and let $E'$ be the corresponding set of edges. We must have $s(\M \circ \Q', c^*) \geq s(\M \circ \Q', d)$, hence $|E'| = |\Q'| \leq q$. For every $u \in U$ we most have $s(\M \circ \Q', c^*) \geq s(\M \circ \Q', u)$, hence $\deg_{E'}(u) \geq 1$. Overall, $E'$ contains at most $q$ edges, and for every $u \in U$ there is at least one edge in $E'$ which covers it. Therefore $E'$ is an exact cover.
\end{proof}
}

For the rules Condorcet and Maximin, the proofs of
Theorem~\ref{thm:computeWinCondorcet} and
Theorem~\ref{thm:computeWinMaximin}, respectively, show a reduction
from \#X3C to computing the number of subsets $\Q' \subseteq \Q$ such
that $c$ is a winner of $\M \circ \Q'$. Since it is \np-hard to decide
whether there is an exact cover, it is also \np-hard to decide whether
there exists $\Q' \subseteq \Q$ such that $c$ is a winner of
$\M \circ \Q'$.

\begin{theorem}
  \label{thm:winZeroCondorcetMaximin}
  Under Condorcet and Maximin, CCAUV (and deciding whether $\Pr[\win c {\T'}] > 0$)
  is \np-complete.
\end{theorem}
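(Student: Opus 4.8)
The plan is to observe that both NP-hardness reductions are already contained in the earlier proofs and that membership in NP is immediate. For membership, note that CCAUV is a polynomially balanced, polynomially checkable existence problem: a sublist $\Q' \subseteq \Q$ serves as a certificate, and verifying that $c$ is a winner of $\M \circ \Q'$ under Condorcet (comparing the $N_{\M \circ \Q'}(c,c')$ values) or under Maximin (computing the Maximin scores) takes polynomial time. Hence CCAUV is in \np for both rules.

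For NP-hardness, the key observation is that the two reductions built in the proofs of Theorem~\ref{thm:computeWinCondorcet} (Condorcet) and Theorem~\ref{thm:computeWinMaximin} (Maximin) are in fact reductions from the \emph{decision} problem X3C, not merely its counting version \#X3C. In each of those proofs the established correspondence is a statement of the form ``$c$ is a winner of $\M \circ \Q'$ if and only if the associated edge set $E' = \set{e : Q_e \in \Q'}$ is an exact cover of $U$.'' Consequently, there exists some $\Q' \subseteq \Q$ for which $c$ is a winner of $\M \circ \Q'$ if and only if the instance $(U,E)$ admits an exact cover. Since deciding the existence of an exact cover (X3C) is \np-complete, and the candidate set, the voting profile $\M \circ \Q$, and the preferred candidate $c$ are all constructed in polynomial time, this is a valid polynomial-time many-one reduction establishing \np-hardness of CCAUV under each of the two rules.

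Finally, to obtain the statement about $\Pr[\win c {\T'}]$, I would invoke the general equivalence noted at the start of Section~\ref{sec:approx-hardness}: deciding whether $\Pr[\win c {\T'}] > 0$ is polynomial-time equivalent to CCAUV (give every voter of $\M$ probability $1$, every voter of $\Q$ probability $1/2$, and ignore probability-zero voters), because positivity of the winning probability does not depend on the concrete nonzero probabilities. This transfers \np-completeness from CCAUV to the problem of deciding $\Pr[\win c {\T'}] > 0$ under both Condorcet and Maximin.

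The only genuine point requiring care is confirming that the correspondences from the earlier proofs are two-sided at the level of existence, so that they witness a reduction from X3C and not only a parsimonious-style reduction from \#X3C; both proofs already phrase the crucial equivalence as an ``if and only if'' between $c$ being a winner and $E'$ being an exact cover, so no additional construction is needed and the present theorem follows immediately.
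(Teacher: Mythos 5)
Your proof is correct and takes essentially the same route as the paper: the paper likewise observes that the \#X3C reductions in the proofs of Theorems~\ref{thm:computeWinCondorcet} and~\ref{thm:computeWinMaximin} are existence-preserving (a sublist $\Q' \subseteq \Q$ making $c$ a winner of $\M \circ \Q'$ exists if and only if an exact cover exists), so \np-hardness of CCAUV follows directly from \np-hardness of X3C, and the transfer to deciding $\Pr[\win c {\T'}] > 0$ uses the same probability-$1$ / probability-$1/2$ equivalence stated at the beginning of Section~\ref{sec:approx-hardness}. Your explicit \np-membership argument (the sublist $\Q'$ as certificate) is the only part the paper leaves implicit, and it is correct.
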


\subsection{Tractable Zeroness for Binary Rules}
What about the positional scoring rules that are not covered by the
previous section? In particular, is there an FPRAS for $k$-approval
and $k$-veto for $k>1$ where an exact evaluation is \sharpp-hard
(Theorem~\ref{thm:computeWinApproval} and~\ref{thm:computeWinVeto})?
The question remains open. We do know, however, that the proof
technique of this section fails on them since it turns out that
zeroness can be decided in polynomial time for these rules (and, in
fact, every binary positional scoring rule).

\begin{theorem} \label{thm:WinZeroBinaryScores} For every binary
  positional scoring rule, CCAUV is solvable in polynomial
  time and, hence,  whether  $\Pr[\win c {\T'}] > 0$ can be decided in polynomial
  time.
\end{theorem}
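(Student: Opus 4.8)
The plan is to exploit the reduction already set up in this section: deciding whether $\Pr[\win c {\T'}] > 0$ is polynomial-time interreducible with CCAUV, so it suffices to give a polynomial-time algorithm for CCAUV under any binary positional scoring rule. Fix an instance $C$, $\M$, $\Q$, $c$. For a binary rule the scoring vector $\vec{s}_{|C|}$ is monotone with entries in $\set{0,1}$, hence a block of ones followed by zeros; write $k$ for the number of ones, so each voter awards $1$ to each of its top $k$ candidates and $0$ to the rest. Since $\vec{s}_m(j)$ is polynomial-time computable, $k$ and all per-voter scores are available in polynomial time.

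The key structural observation I would establish is a \e{monotonicity} property of the pairwise margins. Call a voter $Q \in \Q$ \e{good} if it ranks $c$ among its top $k$ positions (so $s(Q, c) = 1$) and \e{bad} otherwise. For $c' \neq c$ and a sub-profile $\Q'$, let $\delta_{c'}(\Q') = s(\M \circ \Q', c) - s(\M \circ \Q', c')$. Adding one good voter $Q$ raises $s(\cdot, c)$ by exactly $1$ and raises $s(\cdot, c')$ by $s(Q, c') \in \set{0,1}$, so it changes $\delta_{c'}$ by $1 - s(Q, c') \geq 0$. In words, an approving voter can only help $c$, and it does so simultaneously against \e{every} opponent; this is precisely where the two-valued structure is used.

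The algorithm is then to add \e{all} good voters: let $\Q_g \subseteq \Q$ be the set of good voters, compute the scores in $\M \circ \Q_g$ directly, and accept iff $c$ is a winner there. For correctness, one direction is immediate, since $\Q_g$ is itself a candidate solution. Conversely, suppose some $\Q'$ makes $c$ a winner. Deleting the bad voters from $\Q'$ leaves $s(\cdot, c)$ unchanged and can only lower opponents' scores, so I may assume $\Q' \subseteq \Q_g$; passing from $\Q'$ to $\Q_g$ only adds good voters, which by the monotonicity property weakly increases each margin $\delta_{c'}$. As all margins were already nonnegative, they stay nonnegative, so $c$ is a winner of $\M \circ \Q_g$, as required. (The same argument handles the unique-winner semantics, replacing $\geq$ by $>$.)

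The content of the proof is the monotonicity step; everything else (reading off $k$, evaluating scores, checking the winner condition) is routine bookkeeping. I expect no real obstacle, but the place to be careful is arguing that bad voters can be discarded without loss and that the margin increments are genuinely nonnegative coordinatewise. It is worth noting that this is exactly the step that breaks for three or more score values: an approving voter can then lift an opponent by more than it lifts $c$, so ``add all approving voters'' is no longer optimal, consistent with the $\np$-hardness of $R(f,\ell)$ for $(f,\ell)\neq(1,1)$ in Corollary~\ref{cor:winZeroThreeValues}.
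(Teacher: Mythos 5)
Your proposal is correct and matches the paper's own proof in essence: the paper likewise adds all voters that contribute $1$ to $c$ (your $\Q_g$, its $\Q^*$) and proves optimality of this set via the same two monotonicity observations, merely performing the steps in the opposite order (add the good voters to a witness $\Q'$ first, then strip the bad ones, rather than strip-then-add). The structural point you highlight---that a voter approving $c$ raises every margin by $1 - s(Q,c') \geq 0$ precisely because scores are binary---is exactly the paper's argument.
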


\begin{proof}
  Let $(C, \M, \Q, c)$ be an instance of CCAUV. Let
  $\Q^* \subseteq \Q$ be the set of all voters of who contribute
  $1$ to $c$. We claim that if any $\Q' \subseteq \Q$
  is such that $c$ is a winner of $\M \circ \Q'$, then $c$ is a winner of
  $\M \circ \Q^*$. Hence, it suffices to test whether $c$ is a winner
  of $\M \circ \Q^*$.

  Let $\Q_1 \subseteq \Q$ such that $c$ is a winner of
  $\M \circ \Q_1$. For every voter of $\Q^* \setminus \Q_1$, if we add
  it to $\M \circ \Q_1$ then the score of $c$ increases by 1, and the
  score of every other candidate increases by at most 1. Hence, we can
  add all voters of $\Q^* \setminus \Q_1$ and $c$ remains a winner,
  that is, $c$ is a winner of $\M \circ \Q_2$ where
  $\Q_2 = \Q_1 \cup \Q^*$.  Now, for every voter in
  $\Q_2 \setminus \Q^*$, if we remove it from $\M \circ \Q_2$ then the
  score of $c$ is unchanged (since these voters contribute 0 to the
  score of $c$) and the score of the other candidates cannot
  increase. Therefore $c$ is a winner of $\M \circ
  \Q^*$. 
\end{proof}
\section{Approximate Losing Probability}\label{sec:approx-losing}
In the previous section, we showed that a multiplicative approximation
of $\Pr[\win c {\T'}]$ is often intractable since it is hard to
determine whether this probability is zero. In cases where it is
tractable to determine whether it is zero, the existence of an
approximation scheme (and even a constant-ratio approximation) remains
an open problem. In this section we show that, in contrast, we can
often get an FPRAS for the probability of \e{losing}:
$\Pr[\lose c {\T'}] = 1-\Pr[\win c {\T'}]$.

\begin{theorem} \label{thm:losingApproxPolyScoreCondorcet} There is a
  multiplicative FPRAS for estimating
  $\Pr[\lose c {\T'}] = 1-\Pr[\win c {\T'}]$ under every positional
  scoring rule with polynomial scores,  and under the Condorcet rule.
\end{theorem}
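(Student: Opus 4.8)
The plan is to express $\Pr[\lose c {\T'}]$ as the probability of a union of pairwise ``defeat'' events and to estimate it with the Karp--Luby--Madras coverage technique. Writing $E_{c'}$ for the event that candidate $c'\neq c$ beats $c$ in the random profile $\T'$, we have $\Pr[\lose c {\T'}]=\Pr\bigl[\bigcup_{c'\neq c}E_{c'}\bigr]$. For a positional scoring rule, $E_{c'}$ is the event $s(\T',c')>s(\T',c)$ (strict, by the co-winner semantics), and for Condorcet it is the event $N_{\T'}(c,c')\le N_{\T'}(c',c)$, i.e., that $c$ fails to strictly beat $c'$ in the head-to-head contest. Recall that the KLM estimator yields a multiplicative FPRAS for $\Pr[\bigcup_i E_i]$ provided three ingredients are available in polynomial time: (i) the exact value of each marginal $\Pr[E_i]$; (ii) a sampler for the distribution conditioned on $E_i$; and (iii) for a sampled outcome, the count $\kappa$ of events that it satisfies.

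I would obtain all three ingredients from a single object: for each ordered pair $(c,c')$, the distribution of the \emph{score gap} $D_{c'}=\sum_{i\in I}g_i$, where voter $v_i$ deterministically contributes $g_i=s(T_i,c')-s(T_i,c)$ (for scoring) or $g_i=+1$ if $v_i$ ranks $c'$ above $c$ and $g_i=-1$ otherwise (for Condorcet), and $i\in I$ independently with probability $p_i$. Then $E_{c'}=\{D_{c'}>0\}$ (respectively $\{D_{c'}\ge 0\}$). Because the scores are polynomial, each $g_i$ is a polynomially bounded integer and $D_{c'}$ ranges over a set of size $\poly(n,m)$, so the dynamic program $P_t(d)=p_t\,P_{t-1}(d-g_t)+(1-p_t)\,P_{t-1}(d)$ (with $P_0(0)=1$) computes the full law of $D_{c'}$ in polynomial time; summing over the defeat range gives $\Pr[E_{c'}]$, which settles (i). Ingredient (iii) is immediate: a sampled participation set $I$ fixes all the scores, so $\kappa$ is just the number of candidates $c''$ whose gap $D_{c''}$ lies in the defeat range, computable directly. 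For ingredient (ii) I would sample conditioned on $E_{c'}$ by a standard backward trace through the same DP table: first draw a target value $d^*$ in the defeat range with probability $P_n(d^*)/\Pr[E_{c'}]$, then process voters from $t=n$ down to $t=1$, at each step placing $v_t\in I$ with probability $p_t\,P_{t-1}(d-g_t)/P_t(d)$ and decrementing the residual target by $g_t$ when $v_t$ is included. This produces an exact sample of $I$ conditioned on $D_{c'}=d^*$, and hence, after averaging over $d^*$, conditioned on $E_{c'}$.

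Finally I would assemble the estimator and analyze it. Put $S=\sum_{c'\neq c}\Pr[E_{c'}]$; the coverage identity gives $\Pr[\lose c {\T'}]=\sum_{c'\neq c}\Pr[E_{c'}]\cdot\expected_{I\sim E_{c'}}[1/\kappa(I)]=S\cdot\expected[1/\kappa]$, so a single trial (pick $c'$ with probability $\Pr[E_{c'}]/S$, sample $I\sim E_{c'}$, output $1/\kappa(I)$) is an unbiased estimator of $\Pr[\lose c {\T'}]/S$ taking values in $[1/(m-1),1]$ with mean at least $1/(m-1)$. Averaging $O\bigl((m/\epsilon^2)\log(1/\delta)\bigr)$ such trials and rescaling by $S$ yields a multiplicative $(\epsilon,\delta)$-approximation in polynomial time via a Chernoff bound (with the degenerate case $S=0$, where $\Pr[\lose c {\T'}]=0$, handled by returning $0$). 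I expect the routine estimator analysis to be the easy part; the step that needs the most care is verifying that the \emph{polynomial-scores} hypothesis is exactly what keeps both the gap-distribution DP and the conditional sampler polynomial, and checking that the Condorcet losing event (with its non-strict inequality) fits the same union-of-pairwise-events template as the scoring rules.
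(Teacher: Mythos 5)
Your proposal is correct and follows essentially the same route as the paper's proof: the same decomposition of $\lose c {\T'}$ into a union of pairwise defeat events, the same Karp--Luby--Madras framework, and the same score-gap dynamic program, whose polynomial size is exactly what the polynomial-scores hypothesis (or the $\pm 1$ gaps in the Condorcet case) guarantees. The only divergence is in the conditional sampler---the paper draws voters forward one at a time, recomputing $\Pr[\loseto c d {\T'} \mid I \cap S = S']$ by a fresh dynamic program at each step (Algorithm~\ref{algorithm:sampleLoseTo} together with Lemmas~\ref{lemma:lostToGivenPolyScore} and~\ref{lemma:lostToGivenCondorcet}), whereas you draw the terminal gap value and then trace backward through a single DP table---but these are equivalent realizations of the same ingredient, yours being marginally more economical since the table is built once per pair of candidates.
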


In the remainder of this section, we prove
Theorem~\ref{thm:losingApproxPolyScoreCondorcet}. We adapt the
technique of Karp-Luby-Madras~\cite{DBLP:journals/jal/KarpLM89} for
approximating the number of satisfying assignments of a DNF formula.

For a positional scoring rule, $c$ is not a winner if there exists
another candidate $d$ such that $s(\T', d) > s(\T', c)$. For a pair of
candidates $c \neq d$, let $\loseto c d {\T'}$ be the event that
$s(\T', c) > s(\T', d)$.  We can write
$\lose c {\T'} = \vee_{d \neq c} \loseto d c {\T'}$. Under Condorcet,
$c$ is not a winner if there exists $d \neq c$ such that
$N_{\T'}(d, c) \geq N_{\T'}(c, d)$. Therefore the event
$\loseto c d {\T'}$ is $N_{\T'}(c, d) \geq N_{\T'}(d, c)$ and we again
have
$\lose c {\T'} = \vee_{d \neq c} \loseto d c {\T'}$.

As before, let $I \subseteq [n]$ be the random variable that
represents the random set of voters who cast their vote. For the
Karp-Luby-Madras algorithm to estimate $\Pr[\vee_{d \neq c} \loseto d c {\T'}]$, we need
to perform the following tasks in polynomial time:
\begin{enumerate}
    \item Test whether $\loseto c d {\T'}$ is true in a given sample;
    \item Compute $\Pr[\loseto c d {\T'}]$ for every $c$ and $d$;
    \item Sample a random set $I$ of voters from the posterior
      distribution conditioned on the event $\loseto c d
      {\T'}$.
    \end{enumerate}
The first task is straightforward. For the other two tasks, we assume that we can compute $\Pr[\loseto c d {\T'} \mid I \cap S = S']$ in polynomial time, given a profile $\T$, a pair of candidates $c, d$ and voter sets $S \subseteq [n]$ and $S' \subseteq S$. Later, we will show how this can be done for particular rules. This solves the second task because $\Pr[\loseto c d {\T'}] = \Pr[\loseto c d {\T'} \mid I \cap \emptyset = \emptyset]$.

\begin{algorithm}[t]
\caption{Sampling from the posterior distribution conditioned on $\loseto c d {\T'}$.}\label{algorithm:sampleLoseTo}
\DontPrintSemicolon
Define $J_0 \eqdef \emptyset$\;
\For{$i = 1, \dots, n$} {
Compute $q_i \eqdef \Pr[i \in I \mid \loseto c d {\T'} \land I_{i-1} = J_{i-1}]$ \;
With probability $q_i$ define $J_i \eqdef J_{i-1} \cup \set{i}$, otherwise $J_i \eqdef J_{i-1}$\;
}
\Return $J \eqdef J_n$
\end{algorithm}

For the task of sampling from the posterior distribution conditioned
on $\loseto c d {\T'}$, for every $i \leq n$ we define a random
variable $I_i = I \cap \set{1, \dots, i}$.
Algorithm~\ref{algorithm:sampleLoseTo} presents the sampling
procedure, which constructs a random set of voters iteratively. It
begins with $J_0 \eqdef \emptyset$. Then, for every $i \in [n]$, at
the $i$th iteration, the algorithm defines
$J_i \eqdef J_{i-1} \cup \set{i}$ with probability
$\Pr[i \in I \mid \loseto c d {\T'}\land I_{i-1} = J_{i-1}]$, and
$J_i \eqdef J_{i-1}$ otherwise. The output is $J \eqdef
J_n$. 

We show that the algorithm is correct: for all $U \subseteq [n]$ we have
\begin{equation}\label{eq:eq-prob}
  \Pr[J = U] = \Pr[I = U \mid \loseto c d {\T'}]\,.
\end{equation} 
For that, we show by
induction on $i$ that for every $U \subseteq [i]$ we have
\begin{equation}\label{eq:eq-prob-induction}
  \Pr[J_i = U] = \Pr[I_i = U \mid \loseto c d {\T'}]\,.
\end{equation} 
Then, for
$i = n$ we get $\Pr[J = U] = \Pr[I = U \mid \loseto c d \T]$.

For the base $i=1$, the equality~\eqref{eq:eq-prob-induction} holds by the definition of the algorithm. Let $i > 1$, assume the equality holds
for $i-1$, and let $U \subseteq [i]$. If $i \in U$ then
\begin{align*}
    \Pr[J_i = U] = & \Pr[J_{i-1} = U \setminus \set{i}] \cdot \Pr[i \in J_i \mid J_{i-1} = U \setminus \set{i}] \\
  = & \Pr[I_{i-1} = U \setminus \set{i} \mid \loseto c d {\T'}] \cdot \Pr[i \in I \mid \loseto c d {\T'} \land I_{i-1} = U
    \setminus \set{i}] \\
  = & \Pr[I_i = U \mid \loseto c d {\T'}]\,.
\end{align*}
Otherwise, $i \notin U$, similarly,
\begin{align*}
    \Pr[J_i = U] &= \Pr[J_{i-1} = U \setminus \set{i}] \cdot \Pr[i \notin J_i \mid J_{i-1} = U \setminus \set{i}] = \Pr[I_i = U \mid \loseto c d {\T'}]\,.
\end{align*}
This concludes the correctness of the algorithm.

We now show that the algorithm can
be realized in polynomial time. In the the $i$th
iteration, we need to compute
$q_i = \Pr[i \in I \mid \loseto c d {\T'}\land I_{i-1} = J_{i-1}]$, that is:
\[ q_i = \frac{\Pr[\loseto c d {\T'} \land I_i = J_{i-1} \cup \set{i}]}{\Pr[\loseto c d {\T'}\land I_{i-1} = J_{i-1}]}\,. \]
For the denominator, observe that
\[ \Pr[\loseto c d {\T'} \land I_{i-1} = J_{i-1}] = \Pr[I_{i-1} = J_{i-1}] \times \Pr[\loseto c d {\T'} \mid I_{i-1} = J_{i-1}]\,. \]
We can compute $\Pr[I_{i-1} = J_{i-1}]$ directly from the definition:
\[ \Pr[I_{i-1} = J_{i-1}] = \prod_{j \in [i-1] \cap J_{i-1}} p_j \prod_{j \in [i-1] \setminus J_{i-1}} (1-p_j) \,. \]
For the second part, we have 
\[ \Pr[\loseto c d {\T'} \mid I_{i-1} = J{i-1}] = \Pr[\loseto c d {\T'} \mid I \cap [i-1] = J_{i-1}]\,, \]
which can be computed in polynomial time by our assumption that $\Pr[\loseto c d {\T'} \mid I \cap S = S']$ is computable in polynomial time. We can similarly
compute the numerator $\Pr[\loseto c d {\T'}, I_i = J_{i-1} \cup
\set{i}]$.

It remains to show that $\Pr[\loseto c d {\T'} \mid I \cap S = S']$
can indeed be calculated in polynomial time for $S \subseteq [n]$ and
$S' \subseteq S$. We start with positional scoring rules with
polynomial scores.

\begin{lemma}
\label{lemma:lostToGivenPolyScore}
For every positional scoring rule with polynomial scores, we can
compute $\Pr[\loseto c d {\T'} \mid I \cap S = S']$ in polynomial time
given a voting profile $\T$ over a set $C$ of candidates, a pair
$c \neq d$ of candidates, probabilities $(p_1, \dots, p_n$), and 
sets $S' \subseteq S$.
\end{lemma}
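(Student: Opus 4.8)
The plan is to turn the conditional probability into a threshold query on the distribution of a single integer-valued random variable, and to compute that distribution by a dynamic program in the spirit of Theorem~\ref{thm:computeWinPluralityVeto}. Fix the profile $\T$, the pair $c \neq d$, the probabilities $(p_1, \dots, p_n)$, and the conditioning sets $S' \subseteq S \subseteq [n]$. Conditioning on $I \cap S = S'$ splits the voters into three classes: each $i \in S'$ surely casts its vote, each $i \in S \setminus S'$ surely abstains, and each $i \in [n] \setminus S$ still votes independently with probability $p_i$. Writing $\Delta \eqdef s(\T', c) - s(\T', d)$, the event $\loseto c d {\T'}$ that $c$ loses to $d$ is exactly $\{\Delta < 0\}$, so it suffices to compute the conditional distribution of $\Delta$ and then add up the probability it assigns to the negative integers.

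I would decompose $\Delta = \delta_0 + Y$, where $\delta_0 = \sum_{i \in S'} \big(s(T_i, c) - s(T_i, d)\big)$ is the deterministic contribution of the forced-in voters and $Y \eqdef \sum_{i \in [n] \setminus S} X_i$ collects the free voters, with $X_i$ the independent variable equal to $s(T_i,c) - s(T_i,d)$ when $i \in I$ (probability $p_i$) and to $0$ otherwise. The distribution of $Y$ is obtained by a dynamic program over the voters: let $N(t, y)$ be the probability that the partial sum over the free voters in $[t]$ equals $y$, with $N(0, 0) = 1$. The recurrence keeps $N(t,y) = N(t-1,y)$ for every $t \in S$ (forced voters do not randomize $Y$), and for $t \in [n] \setminus S$ sets
\[ N(t, y) = p_t \cdot N\big(t-1,\; y - (s(T_t,c) - s(T_t,d))\big) + (1 - p_t)\cdot N(t-1, y)\,. \]
Shifting by $\delta_0$ recovers the distribution of $\Delta$, and the required value is $\Pr[\loseto c d {\T'} \mid I \cap S = S'] = \sum_{y\,:\,y + \delta_0 < 0} N(n, y)$.

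Correctness follows by an induction on $t$ identical to the plurality argument, using that distinct voters are independent. The crucial point---and the only place the hypothesis of \e{polynomial scores} is used---is that each $X_i$ has magnitude at most the top score $\Vec{s}_m(1)$, so $Y$ ranges over the integers in $[-n\,\Vec{s}_m(1),\, n\,\Vec{s}_m(1)]$; this interval has polynomial length, the DP table has polynomially many cells, and each is filled in constant time. I expect the main obstacle to be careful bookkeeping rather than a new idea: correctly translating the conditioning $I \cap S = S'$ into the forced-in, forced-out, and free voter classes so that $\delta_0$ and the set of randomized voters are exactly right, keeping the index $y$ confined to the polynomial window, and handling the strict inequality at the threshold so that a tie ($\Delta = 0$) is attributed to the side on which $c$ does not lose.
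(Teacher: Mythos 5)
Your proposal matches the paper's proof in essentially every respect: the paper likewise defines $x_i = s(T_i,c)-s(T_i,d)$, splits off the forced contribution $D=\sum_{i\in S'}x_i$ so the conditional probability becomes an unconditional threshold event for the free-voter sum $\sum_{i\in I\setminus S}x_i$, and runs the same dynamic program over an integer window of polynomial length (the paper tabulates tail probabilities $N(t,y)=\Pr[\sum_{i\in I\cap[t]}x_i>y]$ where you tabulate point masses and sum at the end, an immaterial difference). The one slip is notational: the paper defines $\loseto c d {\T'}$ as the event $s(\T',c) > s(\T',d)$, i.e., $\Delta > 0$ rather than your $\Delta < 0$, but since your DP yields the full conditional distribution of $\Delta$, reading off the correct strict tail (with ties excluded, consistent with the co-winner semantics you anticipated) is immediate.
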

\begin{proof}
For every voter $v_i$ define $x_i = s(T_i, c) - s(T_i, d)$ and define
$D = \sum_{i \in S'} x_i$. Observe that
\def\bigmid{\mathrel{\bigr\vert}}
\begin{align*}
    &\Pr[\loseto c d {\T'} \mid I \cap S = S'] = \Pr \spar{\sum_{i \in I} x_i > 0 \bigmid I \cap S = S'} \\
    &= \Pr \spar{\sum_{i \in I \cap S} x_i  + \sum_{i \in I \setminus S} x_i > 0 \bigmid I \cap S = S' } = \Pr \spar{\sum_{i \in I \setminus S} x_i > -D}\,.
\end{align*}

We show that we can compute
$\Pr \spar{\sum_{i \in I \setminus S} x_i > -D}$ in polynomial
time. Assume, without loss of generality, that the set of voters
outside $S$ is $[n] \setminus S = \set{1, \dots, k}$.

Let $A$ be the sum of the negative values in $x_1, \dots, x_k$ and $B$ the sum of the positive values. For every $t \leq k$ and $y \in \integer$, define $N(t, y) = \Pr[\sum_{i \in I \cap [t]} x_i > y]$, our goal is to compute $N(k, -D)$. If $y < A$ then $N(t,y) = 1$ and if $y > B$ then $N(t,y) = 0$. Using dynamic programming, we compute $N(t, y)$ for all $t \leq k$ and $y \in \set{A, \dots, B}$ in polynomial time. Note that $B-A = \poly(n, m)$ since the scores are polynomial.

For $t=0$ we have $N(t, y) = 1$ if $y < 0$ and $N(t, y) = 0$ otherwise. Let $t \geq 1$, consider two cases. If $t \in I$ then for
the event $\sum_{i \in I \cap [t]} x_i > y$ we need $\sum_{i \in I
  \cap [t-1]} x_i > y-x_t$, otherwise we need $\sum_{i \in I \cap
  [t-1]} x_i > y$. Hence,
\[ N(t, y) = p_t \cdot N(t-1, y-x_t) + (1-p_t) \cdot N(t-1, y)\,. \]
This concludes the proof.
\end{proof}

Using a similar analysis to the proof of
Lemma~\ref{lemma:lostToGivenPolyScore}, we obtain the same result for the Condorcet rule.

\def\lemmalostToGivenCondorcet{ For the Condorcet rule, we can compute
  $\Pr[\loseto c d {\T'} \mid I \cap S = S']$ in polynomial time given
  a voting profile $\T$ over a set $C$ of candidates, a pair
  $c \neq d$ of candidates, probabilities $(p_1, \dots, p_n$) and sets
  $S' \subseteq S$. }
\begin{lemma}
\label{lemma:lostToGivenCondorcet}
\lemmalostToGivenCondorcet
\end{lemma}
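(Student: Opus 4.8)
The plan is to mimic the proof of Lemma~\ref{lemma:lostToGivenPolyScore} almost verbatim, replacing the score difference $s(T_i,c)-s(T_i,d)$ by the pairwise head-to-head comparison between $c$ and $d$. First I would recall that under Condorcet the event $\loseto c d {\T'}$ is $N_{\T'}(c,d) \geq N_{\T'}(d,c)$, and observe that the quantity $N_{\T'}(c,d) - N_{\T'}(d,c)$ is again a sum of per-voter contributions. Concretely, for every voter $v_i$ define $x_i = +1$ if $v_i$ ranks $c$ above $d$ in $T_i$, and $x_i = -1$ otherwise (there are no ties inside a linear order, so exactly one of these holds). Then $N_{\T'}(c,d) - N_{\T'}(d,c) = \sum_{i \in I} x_i$, so $\loseto c d {\T'}$ is precisely the event $\sum_{i \in I} x_i \geq 0$, which is linear in the set of participating voters exactly as in the scoring case.

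Next I would condition on $I \cap S = S'$. Writing $D = \sum_{i \in S'} x_i$ for the fixed, known contribution of the voters in $S'$, the voters of $S \setminus S'$ contribute zero (they are excluded by the conditioning), while the voters of $[n] \setminus S$ are independent of the conditioning event. Hence, exactly as before,
\[ \Pr[\loseto c d {\T'} \mid I \cap S = S'] = \Pr\Bigl[\sum_{i \in I \setminus S} x_i \geq -D\Bigr]. \]
It then remains to evaluate this last probability, and here I would reuse the dynamic program of the previous lemma. Assuming without loss of generality that $[n] \setminus S = \set{1, \dots, k}$, I would define $N(t, y) = \Pr[\sum_{i \in I \cap [t]} x_i \geq y]$, target the value $N(k, -D)$, and use the recurrence $N(t, y) = p_t \cdot N(t-1, y - x_t) + (1 - p_t) \cdot N(t-1, y)$ with base case $N(0, y) = 1$ if $y \leq 0$ and $0$ otherwise.

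Two points deserve attention, but neither is a genuine obstacle. First, the comparison is non-strict ($\geq$ rather than $>$), which merely shifts the boundary of the recurrence from $y < 0$ to $y \leq 0$ and reflects the fact that a Condorcet winner must \emph{strictly} defeat every opponent. Second, each $x_i \in \set{-1, +1}$, so the relevant range of $y$ is $\set{-k, \dots, k}$ and is automatically polynomial; thus, unlike in Lemma~\ref{lemma:lostToGivenPolyScore}, no assumption on the magnitude of the scores is needed and the table has polynomial size outright. The whole argument goes through because $N_{\T'}(c,d) - N_{\T'}(d,c)$ is a sum of independent bounded per-voter contributions, which is exactly the structural feature that made the scoring-rule proof work.
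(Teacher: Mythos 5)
Your proof is correct and follows essentially the same route as the paper's: the paper partitions the voters outside $S$ into the sets $V_c$ and $V_d$ (those preferring $c$ to $d$ and vice versa) and runs the same dynamic program with $\pm 1$ increments, which is exactly your $x_i \in \set{-1,+1}$ encoding; your base case $N(0,y)=1$ iff $y \leq 0$ and the observation that the table range is $O(n)$ also match the paper. The only difference is notational, so nothing further is needed.
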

\begin{proof}
Recall that for the Condorcet rule, the event $\loseto c d {\T'}$ is $N_{\T'}(c,d) \geq N_{\T'}(d,c)$. Let $V_c$ be the set of voters which rank $c$ higher than $d$, and let $V_d$ be the set of voters which rank $d$ higher than $c$. Define $D = |S' \cap V_c| - |S' \cap V_d|$, observe that
\begin{align*}
    &\Pr[\loseto c d {\T'} \mid I \cap S = S'] = \Pr \spar{|I \cap V_c| \geq |I \cap V_d| \mid I \cap S = S'} \\
    &= \Pr\big[ |(I \cap S) \cap V_c| + |(I \setminus S) \cap V_c|
      \geq |(I \cap S) \cap V_d| + |(I \setminus S) \cap V_d| \mid I \cap S = S'\big] \\
    &= \Pr \spar{ |(I \setminus S) \cap V_c| - |(I \setminus S) \cap V_d| \geq -D}\,.
\end{align*}

We show that we can compute this probability in polynomial time. Assume w.l.o.g. that the voters which are not in $S$ are $[n] \setminus S = \set{1, \dots, k}$. Let $A = -|[k] \cap V_d|$ and let $B = |[k] \cap V_c|$. For every $t \leq k$ and $y \in \integer$, define 
\[ N(t, y) = \Pr \spar{|(I \cap [t]) \cap V_c| - |(I \cap [t]) \cap V_d| \geq y} \,. \]
Our goal is to compute $N(k, -D)$. If $y < A$ then $N(t,y) = 1$ and if $y > B$ then $N(t,y) = 0$. Using dynamic programming, we compute $N(t, y)$ for all $t \leq k$ and $y \in \set{A, \dots, B}$ in polynomial time. Note that $B-A = O(n)$.

For $t=0$ we have $N(t, y) = 1$ if $y \leq 0$ and $N(t, y) = 0$ otherwise. Let $t \geq 1$, if $t \in V_c$ then 
\[ N(t,y) = p_t \cdot N(t-1, y-1) + (1-p_t) \cdot N(t-1, y)\,. \]
Otherwise, $t \in V_d$, then
\[ N(t,y) = p_t \cdot N(t-1, y+1) + (1-p_t) \cdot N(t-1, y)\,. \]
This concludes the proof.
\end{proof}

Note that Theorem~\ref{thm:losingApproxPolyScoreCondorcet} covers all
voting rules that we discuss in the paper, except for Maximin. The
correctness of Algorithm~\ref{algorithm:sampleLoseTo} holds for
Maximin, and in fact for every voting rule. However, to obtain an
FPRAS for estimating $\Pr[\lose c {\T'}]$ under Maximin, we need to be able to compute the probabilities $q_i$ of Algorithm~\ref{algorithm:sampleLoseTo} in polynomial time for Maximin. Whether this is the case remains an open problem.
\section{Conclusions}\label{sec:conclusions}
We studied the complexity of evaluating the marginal probability of wining in an election where voters are drawn independently at random, each with a given probability. The exact probability is computable in polynomial time for the plurality and veto rules, and \sharpp-hard for the other rules we considered, including classes of positional scoring rules, Condorcet and Maximin.  In some of these cases, it is also intractable to compute a multiplicative approximation of the probability of winning, since it is \np-complete to determine whether this probability is nonzero. For $k$-approval and $k$-veto, the zeroness problem is solvable in polynomial time, but the complexity of the multiplicative approximation remains unknown.  In contrast, we devised a multiplicative FPRAS for the probability of losing in all positional scoring rules with polynomial scores, in addition to Condorcet.

Several problems are left open for future investigation.  First, can we establish a full classification of the (pure) positional scoring rules in terms of their complexity for our problem, as in the case of the possible-winner problem over incomplete preferences~\cite{DBLP:journals/ipl/BaumeisterR12,DBLP:journals/jcss/BetzlerD10,DBLP:journals/jair/XiaC11}?  It might be the case the same classification holds, since our results so far are consistent with the possible winner (tractability for plurality and veto, and hardness for the rest).  To begin with, a specific rule of which absence stands out in Table~\ref{tab:complexityOther} is $(2,1,\dots,1,0)$, namely $R(1,1)$, that received much attention in the context of the possible winners~\cite{DBLP:journals/ipl/BaumeisterR12}.  Moreover, there are other voting rules to consider, such as Copeland and Bucking.  Second, while all of our reductions used only the probabilities $1/2$ and $1$, would it suffice to use just $1/2$ (i.e., counting the voter subsets where the candidate wins)? Finally, the existence of an FPRAS for Maximin remains unknown.

\bibliography{References}

\end{document}